\numberwithin{equation}{section}
\numberwithin{figure}{section}
\numberwithin{table}{section}
\def\section{\@startsection{section}{1} {\z@}{-2.5ex
plus -0.5ex minus -0.1ex}{0.5ex plus 0.1ex}{\large\bfseries}}
\def\subsection{\@startsection{subsection}{2} {\z@}{-2.25ex plus
-0.3ex minus -0.2ex}{0.05ex plus 0.05ex}{\normalsize\bfseries}}
\def\subsubsection{\@startsection{subsubsection}{3} {\z@}{-2.25ex plus
-0.3ex minus -0.2ex}{0.05ex plus 0.05ex}{\normalsize\bfseries\em}}
\numberwithin{equation}{section}
\renewcommand{\cite}{\citeasnoun}
\newcommand{\MM}{\mathbb{M}}
\newcommand{\RR}{\mathbb{R}}
\newtheorem{proposition}{PROPOSITION}
\newcommand{\ba}{\begin{array}}
\newcommand{\ea}{\end{array}}
\newcommand{\bs}{\begin{align}\begin{split}\nonumber}
\newcommand{\bsnumber}{\begin{align}\begin{split}}
\newcommand{\es}{\end{split}\end{align}}
\newcommand{\be}{\begin{eqnarray}}
\newcommand{\ee}{\end{eqnarray}}
\renewcommand{\(}{\left(}
\renewcommand{\)}{\right)}
\newtheorem{assumption}{ASSUMPTION}
\newtheorem{theorem}{THEOREM}
\newtheorem{lemma}{LEMMA}
\newcommand{\EE}{\mathbb{E}}
\theoremstyle{definition}
\renewenvironment{description}[1][0pt]
  {\list{}{\labelwidth=0pt \leftmargin=#1
   }}
  {\endlist}
\begin{document}

\thispagestyle{empty}
\begin{center}
\renewcommand{\thefootnote}{\fnsymbol{footnote}}
 \textbf{\large  A Panel Quantile Approach to Attrition Bias in Big Data: \\ Evidence from a Randomized Experiment\footnote{The authors would like to thank Ju Hyun Kim, Jerry Reiter, Nancy Rose, and Jeffrey Wooldridge for comments on a previous draft as well as conference participants at the Conference in Honor of Jerry Hausman, Wequassett Resort, October 1-3, 2015. We are also grateful to seminar participants at Duke University, University of Illinois at Urbana-Champaign, the 2016 Latin American Econometric Society meeting, and the 26th Annual Meeting of the Midwest Econometrics Group. Danton Noriega-Goodwin provided excellent research assistance.}
 ~\\}
 \vspace{0.2in} Matthew Harding\footnote{Department of Economics and Department of Statistics, University of California - Irvine, 3207 Social Science Plaza B, Irvine, CA 92697; Phone (949) 824 1511; Email: \texttt{harding1@uci.edu}; \url{www.DeepDataLab.org}} and Carlos Lamarche\footnote{Department of Economics, University of Kentucky, 223G Gatton College of Business and Economics, Lexington, KY 40506-0034; Phone: (859) 257 3371;  Email: \texttt{clamarche@uky.edu}}
\end{center}
\begin{center}
\ \today  \\
\end{center}

\vspace{.3in} \hrule   \noindent
\begin{center}
\textbf{Abstract}
\end{center}
\baselineskip=.88\baselineskip  {\small

This paper introduces a quantile regression estimator for panel data models with individual heterogeneity and attrition. The method is motivated by the fact that attrition bias is often encountered in Big Data applications. For example, many users sign-up for the latest program but few remain active users several months later, making the evaluation of such interventions inherently very challenging. Building on earlier work by Hausman and Wise (1979), we provide a simple identification strategy that leads to a two-step estimation procedure. In the first step, the coefficients of interest in the selection equation are consistently estimated using parametric or nonparametric methods. In the second step, standard panel quantile methods are employed on a subset of weighted observations. The estimator is computationally easy to implement in Big Data applications with a large number of subjects.  We investigate the conditions under which the parameter estimator is asymptotically Gaussian and we carry out a series of Monte Carlo simulations to investigate the finite sample properties of the estimator. Lastly, using a simulation exercise, we apply the method to the evaluation of a recent Time-of-Day electricity pricing experiment inspired by the work of Aigner and Hausman (1980).

\vspace{.1in}
 \noindent {\it JEL: C21, C23, C25, C55.}  \\
 \noindent {\it Keywords: Attrition; Big Data; Quantile regression; Individual Effects; Time-of-Day Pricing} 

\baselineskip=1.1\baselineskip
 \vspace{.1in} \hrule \vspace{.2in}
\setcounter{page}{1}
\setcounter{footnote}{0}

\clearpage

\section{Introduction}

Panel data, or multiple observations of the same unit over time, provides social scientists with the opportunity of examining complex relationships and addressing a wide range of estimation issues that could not be addressed using only cross-sectional data (see, e.g., Hsiao 2014\nocite{cH14}, Baltagi 2013\nocite{bB13}, Arellano and Honor\'e 2000\nocite{Arellano20013229}, among others). At the same time, the use of experimental data allows social scientists to identify and consistently estimate treatment effects using a random sample of subjects. However, as originally pointed out by Hausman and Wise (1979)\nocite{dA80}, data collection over time in an experimental setting raises the issue of ``non-ignorable'' missing data, or attrition. In this paper, we investigate non-random attrition in large randomized field trials arising from the increased availability of Big Data.

It is known that several methods are available to address selection and attrition in both cross-sectional and panel data models. Numerous papers have proposed methods for estimating conditional mean models or average treatment effects while dealing with missing observations (e.g., Hausman and Wise (1979), Ridder (1992), Kyriazidou (1997)\nocite{eKyriazidou97}, Hirano, Imbens, Ridder, and Rubin (2001), Nevo (2003), Das (2004)\nocite{Das2004}, Bhattacharya (2008), among others). Quantile regression is becoming increasingly popular in applied microeconomic research and offers a semiparametric alternative approach to standard methods. Although there is a recent number of papers that investigate estimation of a panel quantile regression model (Koenker (2004)\nocite{rK04}, Abrevaya and Dahl (2008)\nocite{Abre05}, Lamarche (2010)\nocite{cL06}, Canay (2011)\nocite{iC11}, Rosen (2012)\nocite{aR10}, Galvao, Lamarche, and Lima (2013)\nocite{cL14}, Chernozhukov, Fern\'{a}ndez-Val, Hahn, and Newey (2013)\nocite{vCwN13}, Harding and Lamarche (2014)\nocite{mH13}, Chernozhukov, Fern\'{a}ndez-Val, Hoderlein, Holzmann, and Newey (2015)\nocite{Chernozhukov2015378}, among others), the literature deals exclusively with the case of no missing data or it implicitly assumes random attrition in the case of unbalanced panels. 

This paper proposes a quantile regression estimator for panel data when units do drop out of the sample. We allow the missing data process to occur over time after an initial random sample of subjects are assigned into either treatment or control groups. To the best of our knowledge, the only papers that investigate this issue are Lipsitz et al. (1997) and Maitra and Vahid (2006)\nocite{pM06}. Our paper however makes two contributions relative to the existing literature. First, we propose a method to estimate a model with individual unobserved heterogeneity that can be a source of attrition. Second, the proposed estimator handles attrition that can arise from both selection on observables and selection on unobservables under a time-homogeneity condition on the missing data process. We illustrate the use of the approach considering a time-of-use electricity pricing where the condition is likely to be satisfied by the availability of Big Data of households' panels. We adjust for selection bias by using additional samples to estimate a propensity score to weight observations.

Although there is a history of contributions in quantile regression that uses weights (see Koenker  (2005, \S5.3); see also, Abadie, Angrist and Imbens (2002)\nocite{Alberto02}, Portnoy (2003)\nocite{sP03}, Wang and Wang (2009)\nocite{jW09}, among others), they have been employed in cross-sectional data under different models and assumptions. For instance, Wang and Wang (2009) construct a weighted quantile objective function with the idea of redistributing the mass of observations in a censored quantile problem. Lipsitz et al. (1997)\nocite{sLipsitz1997} and Maitra and Vahid (2006) propose a weighting scheme for longitudinal data but their estimating equations would lead to inconsistent and inefficient results in the attrition model of Hausman and Wise (1979). When additional data are available, it is possible to adjust the selection bias as shown in Ridder (1992)\nocite{gRidder92}, Hirano, Imbens, Ridder, and Rubin (2001)\nocite{kHirano01}, Bhattacharya (2008)\nocite{Bhattacharya2008}, and Deng, Hillygus, Reiter, Si, and Zheng (2013)\nocite{deng2013}. This paper illustrates that it is possible to correct the moment condition corresponding to a conditional quantile panel data problem to avoid biased and inconsistent results in the spirit of Nevo (2003).   

The next section introduces the model for missing data and the proposed estimator. It also shows the asymptotic properties of the estimator. In Section 3, we investigate the small sample performance of the proposed approach considering the cases of completely ignorable and non-ignorable missing data patters. Section 4 illustrates the theory and provides practical guidelines from an application of the method to a simulation exercise using a large randomized trial. We investigate the impact of considering different simulated models of attrition on the performance of several panel quantile methods. We explore an application of a recent Time-of-Day electricity pricing and estimate the effect of ``smart'' (communicating) technologies on households' savings from electricity consumption. Section 5 concludes.  

\section{The Model and Proposed Estimator}
\subsection{Background}
Let $Y_{it} \in \RR$ denote a potentially unobserved $t$-th response of the $i$-th individual. The model for $Y_{it}$ for $i=1,\hdots,N$ and $t=1,\hdots,T$ is given by,
\begin{equation}
Y_{it} = \bm{d}_{it}' \bm{\delta} + \bm{x}_{it}' \bm{\beta} + \alpha_i + u_{it}, \label{eq:model}
\end{equation}
where $\bm{d}_{it}$ is a $p_d$-dimensional vector of variables indicating whether the unit is under treatment and whose support is $\mathcal{D} \subseteq \RR^{p_d}$, $\bm{x}_{it}$ is a $p_x$-dimensional vector of exogenous independent variables with support $\mathcal{X} \subseteq \RR^{p_x}$, $\alpha_i$ is a scalar unobserved time-invariant individual effect and $u_{it}$ is an error term. It is assumed that $Y_{it}$ is observed at $t = 1$ for all $i$ and $Y_{it}$ might not be observed at $t > 1$. Let the variable $s_{it}$ indicate whether the $t$-th response of the $i$-th individual is missing. We define $s_{it} = 1$ if and only if the response variable $Y_{it}$ is observed, and 0 otherwise. It is assumed that $\{ (\bm{d}_{it}',\bm{x}_{it}') \}$ are available for all $(i,t)$.    

Under the assumption of no missing data, $s_{it}=1$ for all $(i,t)$, a quantile regression model for equation $\eqref{eq:model}$ can be written as,
\begin{equation}
Q_{Y_{it}}(\tau | \bm{d}_{it},\bm{x}_{it},\alpha_i) = \bm{d}_{it}' \bm{\delta}(\tau) + \bm{x}_{it}' \bm{\beta}(\tau)  + \alpha_i(\tau),  \label{eq:qrmodel} 
\end{equation}
where $\tau$ is a quantile in the interval $(0,1)$ and the conditional quantile function $Q_{Y_{it}}(\tau | \bm{d}_{it},\bm{x}_{it},\alpha_i) = \inf\{ y : P(Y_{it} < y | \bm{d}_{it},\bm{x}_{it},\alpha_i) \geq \tau \}$. The parameter of interest is the quantile specific treatment effect, $\bm{\delta}(\tau)$, and $\alpha _{i}(\tau)$ is a quantile-specific individual effect capturing unobserved and observed time-invariant heterogeneity that was not adequately controlled by the independent variables in model \eqref{eq:model}. The model assumes that observations arise from location-scale shift family of continuous distributions and it can be considered to be semiparametric since the functional form of the conditional distribution of $Y_{it}$ given $(\bm{d}_{it}',\bm{x}_{it}',\alpha_{i})$ is left unspecified. 

When there is no missing data, the model \eqref{eq:qrmodel} can be consistently estimated under $N$ and $T$ tending to infinity (e.g., Koenker (2004)\nocite{rK04}, Kato, Galvao and Montes-Rojas (2012)\nocite{kK12}) by finding the minimizer of,
\begin{equation}
Q_{NT}(\bm{\delta},\bm{\beta},\bm{\alpha}) = \frac{1}{NT} \sum_{i=1}^N \sum_{t=1}^T  \rho_{\tau} ( Y_{it} - \bm{d}_{it}' \bm{\delta} - \bm{x}_{it}' \bm{\beta} - \bm{z}_{i}' \bm{\alpha}), \label{eq:objectivef} 
\end{equation}
where $\rho_{\tau}  = u (\tau - I(u < 0 ))$ is the standard quantile regression check function (Koenker (2005)\nocite{rK05}), $\bm{\alpha}(\tau) = (\alpha_1(\tau),\hdots,\alpha_N(\tau))'$ is a vector of individual effects, and $\bm{z}_{i} = (0, \hdots, 1, \hdots, 0)'$ is an $N$-dimensional ``incidence" vector. The minimizer of \eqref{eq:objectivef} is also the solution of the following estimating equation:
\begin{equation}
M_{NT}(\bm{\delta},\bm{\beta},\bm{\alpha}) = - \frac{1}{NT} \sum_{i=1}^N \sum_{t=1}^T  (\bm{d}_{it}',\bm{x}_{it}',\bm{z}_{i}') \psi_{\tau} ( Y_{it} - \bm{d}_{it}' \bm{\delta} - \bm{x}_{it}' \bm{\beta} - \bm{z}_{i}' \bm{\alpha}) = o_p(a_{NT}),   
\end{equation}
where $\psi_{\tau}(u) = \tau - I(u < 0)$ is the quantile influence function and $a_{NT} \to 0$ as $N$ and $T$ go jointly to infinity under the rates of convergence obtained in Kato, Galvao and Montes-Rojas (2012). It follows that $E(M_{NT}(\bm{\delta}_0,\bm{\beta}_0,\bm{\alpha}_0)) = 0$, and therefore, $M_{NT}(\bm{\delta}(\tau),\bm{\beta}(\tau),\bm{\alpha}(\tau))$ is an unbiased estimating function for the parameter of interest $(\bm{\delta}_0(\tau)',\bm{\beta}_0(\tau)',\bm{\alpha}_0(\tau)')$ provided that $T$ is sufficiently large.  

The approach is motivated by the fact that standard panel transformations are not available in quantile regression. Therefore, several papers in the literature estimate jointly $p=p_x+p_d$ slopes and $N$ individual effects. (The interested reader can find alternative approaches in Abrevaya and Dahl (2008)\nocite{Abre05}, Canay (2011)\nocite{iC11}, Chernozhukov, Fern\'{a}ndez-Val, Hahn, and Newey (2013)\nocite{vCwN13}, among others). Note that the model cannot include an overall intercept, because the intercept and the $N$-dimensional vector of parameters, $\bm{\alpha}(\tau)$, are not jointly identifiable or estimable. In large $N$ and small $T$ settings, it is expected that the previous approach create biases due to the estimation of incidental parameters. 

For the previous reason, we propose below an approach that improves the performance of the fixed effects estimator. Shrinkage of the individual effects towards zero can reduce estimation bias of the slope parameter when $T$ is small. In what follows, the vector of explanatory exogenous variables $\bm{x}_{it}$ includes a constant 1 and might consist on (i) time invariant covariates, $\bm{x}_i$, (ii) baseline characteristics and a deterministic function of time, $\bm{x}_i \cdotp t$, or (iii) time-varying covariates. Let $\bm{\vartheta} = (\bm{\delta}',\bm{\beta}')'$ and $\bm{V}_{it} = (\bm{d}_{it}',\bm{x}_{it}')'$ be a random vector taking values in $\mathcal{V} \subseteq \RR^p$. Moreover, let $\bm{X}_{it} = (\bm{V}_{it}',\bm{z}_i')'$ and $\bm{Z}_{i} = (\bm{0}',\bm{z}_{i}')'$ be a sparse vector of dimension $p+N$. The penalized panel quantile regression estimator (see, e.g., Harding and Lamarche (2017)\nocite{harding2016}, Lamarche (2010)) can be obtained as a solution of the following estimating equation:
\begin{equation}
M_{NT}(\bm{\theta},\lambda) = - \frac{1}{NT} \sum_{i=1}^N \sum_{t=1}^{T} \bm{X}_{it} \psi_{\tau} (Y_{it} - \bm{X}_{it}' \bm{\theta}) + \frac{\lambda}{N} \sum_{i=1}^N \bm{Z}_{i} \psi_{\tau}(\bm{z}_i' \bm{\alpha})  = o_p(a_{NT}) \label{negsubcon:pen}
\end{equation}
where $\lambda \in \RR_{+}$ is a penalty parameter, $\bm{\theta}(\tau) = (\bm{\vartheta}(\tau)',\bm{\alpha}(\tau)')'$ is contained in the parameter space $\bm{\Theta}$ and $a_{NT} \to 0$ as $N,T \to \infty$. In general, the solution of \eqref{negsubcon:pen}, $\hat{\bm{\theta}}(\tau)$, can depend on $\lambda$ but we assume the tuning parameter fixed and supress the dependence for notational convenience.  

\subsection{Attrition}
Suppose now that we have a random sample of individuals who are observed in the first occasion when $t=1$.  The probability of staying in the panel for unit $i$ at time $t$ is,
\begin{equation}
\pi_{0,it} = P(s_{it} = 1 | s_{it-1}=\hdots=s_{i2}=1, \bm{W}_{it}, \bar{\bm{V}}_{i}), \label{pi1}
\end{equation}
where $s_{it} = 1$ if and only if the response variable $Y_{it}$ is observed and 0 otherwise, $\bar{\bm{V}}_{i} = (\bm{V}_{i1}',\hdots,\bm{V}_{iT}')'$ is a vector of observed independent variables and $\bm{W}_{it}$ is a vector of variables that might include latent and observed responses depending on the assumptions associated with the missing data process. For instance, as explained in detail below, $\bm{W}_{it} = (Y_{it-1}, Y_{it-2}, \hdots)'$ in panel data models with selection on observables and $W_{it} = Y_{it}$ in models with selection on unobservables, because $Y_{it}$ is a latent variable for subjects $1 \leq i \leq N$ who dropped the panel at time $t > 1$.

Suppose there exists a monotone missing data pattern as in Robins, Rotnitzky and Zhao (1995)\nocite{Robins1995}. This refers to a situation where once a subject leaves the panel, the return into the sample is not possible. Suppose, for instance, that at time $t=1$, a random sample of $N$ subjects is drawn from the population. At $t=2$, a number of subjects drop out and they are not part of the panel at $t \in \{3,4, \hdots \}$. At $t=3$, other subjects drop out and are out of the sample at $t \in \{4,5,\hdots \}$, etc. Under a monotone missing data pattern, equation \eqref{pi1} can be written as, $\pi_{0,it} = P(s_{it} = 1 | s_{it-1}=1, \bm{W}_{it}, \bar{\bm{V}}_{i}) > 0$, where the strict inequality for all $t=1,...,T$ is required to guarantee the existence of a consistent estimator of the quantile treatment effect, $\bm{\delta}(\tau)$. 

\begin{assumption}\label{A1}
The probability $\pi_{0,it}$ is bounded away from 0, i.e. $\pi_{0,it} > \sigma > 0$ for $i=1,\hdots,N$ and $t=1,\hdots,T$. Moreover, $s_{it} = 0$ implies $s_{it+1} = 0$ for $t = 1,\hdots,T$.
\end{assumption}

Two models have been used for inference in panel data models. Identification results in the presence of missing data are obtained based on selection on observables (e.g., Fitzgerald, Gottschalk, and Moffitt 1998\nocite{fitzgerald1998}), which is also known as missing at random mechanism or simply MAR (Rubin 1976\nocite{Rubin76}, Robins, Rotnitzky and Zhao 1995\nocite{Robins1995}). It implies that $s \perp Y$ conditional on independent variables and observed response variables. The attrition probability can be written as $P(s_{it} = 1 | \bm{Y}_i, \bar{\bm{V}}_{i}) =  P(s_{it} = 1 | \bm{Y}_{i,t-1}, \bar{\bm{V}}_{i})$, where $\bm{Y}_{i} = (Y_{i1},\hdots,Y_{it-1},Y_{it},\hdots,Y_{iT})'$ and $\bm{Y}_{it-1} = (Y_{i1},\hdots,Y_{it-1})'$. The second model is introduced in Hausman and Wise (1979)\nocite{jH79} and it allows for the missing data process to be conditionally dependent of the missing responses. A simplified version of the model, for $T=2$ and $s_{i1}=1$ for all $i$, is:
\begin{eqnarray} 
Y_{it} & = & \bm{d}_{it}' \bm{\delta} + \bm{x}_{it}' \bm{\beta} + \alpha_i + u_{it}, \; \; \; t = \{1,2\} \label{HW:eq1} \\
s_{i2} & = & 1\{ \rho Y_{i2} + \bm{x}_{i2}' \bm{\gamma} + v_{it} > 0 \}. \label{HW:eq3}
\end{eqnarray}
It is immediately apparent that the error terms in equation \eqref{HW:eq1} at $t=2$ and equation \eqref{HW:eq3} are not independent, leading to selection issues. To see this, we replace equation \eqref{HW:eq1} for $t=2$ in equation \eqref{HW:eq3} and obtain a ``reduced form'' equation for the attrition process: $s_{i2} = 1\{ \bm{d}_{i2}' (\rho \bm{\delta}) + \bm{x}_{i2}' (\rho \bm{\beta} + \bm{\gamma}) + \rho \alpha_i + \rho u_{i2} + v_{it} > 0 \}$. In terms of equation \eqref{pi1} under Assumption \ref{A1}, $Y_{i2} = W_{i2}$ and $\bm{x}_{i2} = \bar{\bm{V}}_i$. To consistently estimate the parameters of the model and provide asymptotically efficient estimates, Hausman and Wise (1979) propose a maximum likelihood procedure for a random effects specification that allows testing for the presence of attrition. Fixed effects specifications might help in reducing biases but do not eliminate issues associated with attrition.

These two models rely on assumptions on the missing data process and, for consistent estimation, we do not require additional data as in Ridder (1992)\nocite{gRidder92}, Nevo (2003)\nocite{aNevo03} and Bhattacharya (2008)\nocite{Bhattacharya2008}. When additional data (e.g., ``refreshment" samples) are available, it is possible to correct panel data estimators to avoid biased and inconsistent results in models with both selection on observables and unobservables. Hirano et al. (2001)\nocite{kHirano01} state conditions under which the attrition function can be semi-parametrically identified in a model with selection on unobservables. 

\subsection{Identification}

It has been noted that in the MAR model, it is not possible to introduce dependence of the missing data process $s_{it}$ on $y_{it}$ because it is not observed for all the individuals. Also, the Hausman-Wise (HW) selection on unobservables model depends on parametric assumption and refreshment samples are not always available to practitioners. While the MAR and HW selection models have been extensively investigated and extended for classical conditional mean models, the relatively new literature on panel quantile models does not offer correction for potential inconsistencies arising from unobservables. 

Considering the ideal situation where the probability of dropping out the panel is known, we present an identification result for general patterns of missing data. Consider a slightly different equation \eqref{negsubcon:pen}:
\begin{equation}
M_{NT}(\bm{\theta}(\tau),\bm{\pi}_{0})) = - \frac{1}{NT} \sum_{i=1}^N \sum_{t=1}^{T} \left( \frac{s_{it}}{\pi_{0,it}} \bm{X}_{it} \psi_{\tau} (Y_{it} - \bm{X}_{it}' \bm{\theta}) - \lambda \bm{Z}_{i} \psi_{\tau}(\bm{z}_i' \bm{\alpha})  \right)
\end{equation}
and let $E( M_{it}(\bm{\theta}(\tau),\bm{\pi}_{0})) := E \left( \frac{s_{it}}{\pi_{0,it}} \bm{X}_{it} \psi_{\tau} (Y_{it} - \bm{X}_{it}' \bm{\theta}) - \lambda \bm{Z}_{i} \psi_{\tau}(\bm{z}_i' \bm{\alpha}) \right)$. 

The result of this section requires the following additional conditions:

\begin{assumption}\label{A2}
The probability $\pi_{0,it} = g(\kappa(\bm{W}_{it})' \bm{\gamma}))$ where $g: \RR \to \RR$ is a known, differentiable, strictly increasing function such that $\lim_{c \to -\infty} g(c) = 0$ and $\lim_{c \to +\infty} g(c) = 1$. 
\end{assumption}

\begin{assumption}\label{A0}
For $t-1 < t$, there is an independent sample $\{ \bm{W}_{ih_i} \}_{i=1}^{N}$ from the same population than $\{ \bm{W}_{it} \}_{i=1}^N$, where $h_{i} = \sup \{ h_{ij} :  | h_{ij} - t | < \epsilon \}$ for a collection of dates $\{h_{ij}\}_{j=1}^{J_i}$ between $t-1$ and $t$. It follows that $P(s_{it} = 1 | \bm{W}_{it}, \bar{\bm{V}}_i) - P(s_{it} = 1 | \bm{W}_{ih_i}, \bar{\bm{V}}_i) = 0$ almost surely.
\end{assumption}

\begin{assumption}\label{A3}
Let $\bm{\theta} = (\bm{\delta}',\bm{\beta}',\bm{\alpha}')'$, where $\bm{\alpha} = (\alpha_1,\hdots,\alpha_N)'$ and $\bm{\theta} \in \mathcal{A}^N \times \mathcal{B} \times \mathcal{D}$, where $\mathcal{A}$ is a compact subset of $\RR$, $\mathcal{A}^N$ is a product of $N$ copies of $\mathcal{A}$, and $\mathcal{B}$ and $\mathcal{D}$ are compacts subsets of $\RR^{p_d}$ and $\RR^{p_x}$. Then, $\bm{\theta}_0$ uniquely solves $E\left( \bm{X}_{it} \psi_{\tau} ( Y_{it} - \bm{X}_{it}' \bm{\theta} ) + \lambda \bm{Z}_{i} \psi_{\tau} (\bm{z}_i' \bm{\alpha}) \right) = 0$.
\end{assumption}

Assumption \ref{A2} is similar to condition A3 in Nevo (2003) and it includes several selection models including parametric functions as the logistic model used later. If we let $\bm{\gamma} \in \bm{\Gamma} \subset \RR^M$ and $\bm{\gamma}_0$ be a maximizer of $E\( s_{it} \log(\pi_{0,it}) + (1 - s_{it}) \log(1 - \pi_{0,it}) \)$, we have a condition similar to Assumption 3.2 in Wooldridge (2007)\nocite{Wooldridge20071281}. Condition \ref{A0} requires the availability of measures of the dependent variables over small time intervals and it implies a ``local" time-homogeneity condition. More specifically, it can imply that the joint distribution of $Y_{it},s_{it} | \bar{\bm{V}}_i$ is identical to the joint distribution of $Y_{ih_i},s_{it} | \bar{\bm{V}}_i$. Assumption \ref{A3} implies that the quantile regression model is identified under no missing data. Note that $\lambda=0$ gives the standard condition $E(\bm{X}_{it} \psi_{\tau} (u_{it}(\tau))) = 0$, where $u_{it}(\tau) := Y_{it} - \bm{X}_{it}'\bm{\theta}(\tau)$. When $\lambda > 0$, we require that the $\alpha_i$'s are conditionally independent of $\bm{X}_{it}$ for point identification of the slope parameters. The condition $E(\lambda \bm{Z}_i (\psi_{\tau}(\alpha_{i}))) = 0$ because it is assumed that $E(I(\alpha_i \leq 0)) = \tau$, and it implies that the $\tau$-th conditional quantile of $\alpha_i$ is equal to zero. In the case that $\alpha_i(\tau) = \alpha_{i0}$ for all $\tau$, as in Koenker (2004) and Lamarche (2010), the individual effects are assumed to be drawn from a zero-median distribution function independent of $\bm{X}_{it}$. It is worth noting that the previous assumption can be replaced by a sparsity condition on the parameters of the model, with $\alpha_{i0} = 0$ for all $1 \leq i \leq N$.

\begin{proposition}\label{P1}
Under Assumptions \ref{A1}-\ref{A3}, the treatment effect parameter of the quantile regression model in model \eqref{eq:qrmodel}, $\bm{\delta}(\tau)$, is identified using the sample $\{ \bm{W}_{ih_i} \}$.
\end{proposition}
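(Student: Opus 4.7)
The plan is to show that Assumption \ref{A0} lets the propensity score $\pi_{0,it}$ be recovered from observable data, that the inverse-probability-weighted moment $E[M_{it}(\bm{\theta}(\tau),\bm{\pi}_0)]$ vanishes at the true parameter, and that Assumption \ref{A3} then pins down $\bm{\theta}_0(\tau)$ uniquely, identifying $\bm{\delta}(\tau)$ as a component.

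First I would identify $\pi_{0,it}$. By Assumption \ref{A0}, the conditional distribution of $s_{it}$ given $(\bm{W}_{ih_i},\bar{\bm{V}}_i)$ agrees almost surely with that given $(\bm{W}_{it},\bar{\bm{V}}_i)$, so Assumption \ref{A2} yields $P(s_{it}=1\mid \bm{W}_{ih_i},\bar{\bm{V}}_i)=g(\kappa(\bm{W}_{ih_i})'\bm{\gamma}_0)$. Because $g$ is known, strictly monotone, and takes values in $(0,1)$, and both $s_{it}$ and $\bm{W}_{ih_i}$ are observable, $\bm{\gamma}_0$ is identified as the unique maximizer of the population log-likelihood $E[s_{it}\log g(\kappa(\bm{W}_{ih_i})'\bm{\gamma})+(1-s_{it})\log(1-g(\kappa(\bm{W}_{ih_i})'\bm{\gamma}))]$, in the spirit of Assumption 3.2 of Wooldridge (2007). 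This step is essential because in the Hausman--Wise specification $\bm{W}_{it}$ contains the latent $Y_{it}$, unobserved for attriters; the auxiliary sample $\{\bm{W}_{ih_i}\}$ plays the role of a refreshment-type sample.

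Second, I would verify unbiasedness of the weighted moment. By iterated expectations conditional on $(\bm{W}_{it},\bar{\bm{V}}_i,Y_{it})$, and using that $E[s_{it}\mid \bm{W}_{it},\bar{\bm{V}}_i,Y_{it}]=\pi_{0,it}$ (which holds in the MAR case because $s_{it}\perp Y_{it}\mid(\bm{W}_{it},\bar{\bm{V}}_i)$ and trivially in the Hausman--Wise case because $\bm{W}_{it}$ already contains $Y_{it}$),
\begin{equation*}
E\!\left[\frac{s_{it}}{\pi_{0,it}}\bm{X}_{it}\psi_\tau(Y_{it}-\bm{X}_{it}'\bm{\theta}_0)\right]=E\!\left[\bm{X}_{it}\psi_\tau(Y_{it}-\bm{X}_{it}'\bm{\theta}_0)\right].
\end{equation*}
The penalty piece $E[\lambda\bm{Z}_i\psi_\tau(\bm{z}_i'\bm{\alpha}_0)]$ vanishes by the zero-$\tau$-quantile normalization of $\alpha_i$ noted after Assumption \ref{A3}. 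Hence $E[M_{it}(\bm{\theta}_0(\tau),\bm{\pi}_0)]=0$, and Assumption \ref{A3} states that $\bm{\theta}_0$ is the \emph{unique} zero of the corresponding moment, so $\bm{\delta}_0(\tau)$ is identified.

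The main obstacle is the first step: the local time-homogeneity in Assumption \ref{A0} must be used carefully to justify replacing the possibly unobservable $\bm{W}_{it}$ with the observable $\bm{W}_{ih_i}$ inside the propensity without altering the population object that enters the target IPW moment, and the parameterization in Assumption \ref{A2} must guarantee that the propensity estimated from the auxiliary sample coincides almost surely with the one multiplying $s_{it}$ in the original estimating equation. Once these two objects are reconciled, the IPW algebra in the second step is routine and Assumption \ref{A3} closes the argument.
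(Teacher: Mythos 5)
Your proposal is correct and follows essentially the same route as the paper: an iterated-expectations argument in the spirit of Wooldridge (2007) in which $E\left[s_{it}/\pi_{0,it} \mid \bm{W}_{it},\bm{X}_{it}\right]=1$ removes the selection weighting, Assumption \ref{A0} licenses the substitution of the observable $\bm{W}_{ih_i}$ for the possibly latent $\bm{W}_{it}$, the penalty term vanishes by the zero-$\tau$-quantile normalization of $\alpha_i$, and Assumption \ref{A3} delivers uniqueness. Your explicit preliminary step identifying $\bm{\gamma}_0$ as the maximizer of the population log-likelihood is treated by the paper as a remark following Assumption \ref{A2} rather than as part of the proof, and your explicit handling of why $\psi_\tau(Y_{it}-\bm{X}_{it}'\bm{\theta})$ can be pulled out of the inner expectation in the MAR versus Hausman--Wise cases is, if anything, more careful than the paper's.
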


The result in Proposition \ref{P1} leads to a two-step estimator which extends existing results to the case of selection on unobservables. This is possible under Assumption \ref{A0}, which it is argued to be satisfied in our application by the availability of a `streaming sample' as explained in Section 4.4.
 
\subsection{A Quantile Estimator}

Similarly to Lipsitz et al. (1997)\nocite{sLipsitz1997} and Maitra and Vahid (2006), our method adopts Robins, Rotnitzky and Zhao (1995)\nocite{Robins1995} idea to weight uncensored observations by the inverse probabilities. In contrast with existing work, attrition can depend on variables that are not observed when the subjects drop. 

As before, we first assume that the probability of dropping out of the panel is known. In this case, the quantile regression coefficient $\bm{\theta}_0(\tau) = (\bm{\vartheta}_0(\tau)', \bm{\alpha}_0(\tau)')'$ can be estimated by minimizing the following objective function:
\begin{equation}
Q_{NT}(\bm{\theta}(\tau),\bm{\pi}_{0}) = \frac{1}{NT} \sum_{i=1}^N \sum_{t=1}^{T} \left( \frac{s_{it}}{\pi_{0,it}} \rho_{\tau} ( Y_{it} - \bm{X}_{it}' \bm{\theta}) + \lambda \rho_{\tau}(\bm{z}_i' \bm{\alpha})  \right),  \label{eq:qrweights1} 
\end{equation}
where $\bm{X}_{it} = (\bm{V}_{it}',\bm{z}_{i}')'$ and $\bm{V}_{it}$ contains an intercept. The solution is $\hat{\bm{\theta}}(\tau,\lambda)$. We concentrate our attention to $\{ \hat{\bm{\theta}}(\tau,\lambda), \; \lambda \in [\lambda_L, \infty) \}$, where $\lambda_L \in (0,\infty)$ is a deterministic constant subject to identifiability restrictions. It should be noted that, because the model contains an intercept, we do not attempt to estimate $\hat{\bm{\theta}}(\tau,\lambda)$ for all $\lambda \in [0,\infty)$. Our estimator is defined for $\lambda > 0$, although $\lambda$ can be very small. When $\lambda_L > 0$, there are $\hat{\alpha}_i$'s that are exactly zero, which is equivalent to a model with $m < N$ individual effects. This allows identification of the intercept and the $N$-dimensional vector of parameters $\bm{\alpha}(\tau)$.

Alternatively, the objective function \eqref{eq:qrweights1} can be written simply as,
\begin{equation}
Q_{NT}(\bm{\theta}(\tau),\bm{\pi}_{0}) = \frac{1}{NT} \sum_{i=1}^N \sum_{t=1}^{T_i} \left( \rho_{\tau} ( \tilde{Y}_{it} - \tilde{\bm{X}}_{it}' \bm{\theta}) + \lambda \rho_{\tau}(\bm{z}_i' \bm{\alpha}) \right),  \label{eq:qrweights2} 
\end{equation}
where $\tilde{Y}_{it} = s_{it} Y_{it} / \pi_{0,it}$ and $\tilde{\bm{X}}_{it} = s_{it} \bm{X}_{it} /\pi_{0,it}$. The estimating equation can be expressed as,
\begin{equation}
M_{NT}(\bm{\theta}(\tau),\bm{\pi}_{0}) = - \frac{1}{NT} \sum_{i=1}^N \sum_{t=1}^{T_i} \left( \tilde{\bm{X}}_{it} \psi_{\tau} (\tilde{Y}_{it} - \tilde{\bm{X}}_{it}' \bm{\theta}) - \lambda \bm{Z}_{i} \psi_{\tau}(\bm{z}_i' \bm{\alpha}) \right)
\end{equation}
where the vector $\bm{Z}_{i} = (\bm{0}',\bm{z}_{i}')'$ is defined as before. Naturally, $M_{NT}(\bm{\theta}(\tau),\bm{\pi}_{0})$ might not be equal to zero, so we minimize instead $Q_{NT}(\bm{\theta}(\tau),\bm{\pi}_{0})$. Because the objective function is defined in terms of variables which are reweighted by the inverse probability of staying in the sample, existing linear programming algorithms for panel quantiles can be employed including the functions in the {\tt R} package {\tt quantreg} (Koenker 2013\nocite{Koenker16}). The penalty form is chosen to preserve the linear programming problem, and therefore it has computational advantages. Note that if $\tau=1/2$, we obtain a lasso-type penalty whose statistical advantages are well documented in the literature (see Koenker (2004), Belloni and Chernozhukov (2011), among others).

In reality, the propensity score $\pi_{0,it}$ is unknown and needs to be estimated. There are several alternatives available for estimating $\pi_{0,it}$ based on the assumed missing data mechanisms (Robins, Rotnitzky and Zhao (1995), Nevo (2002), Deng et al. (2013), among others). We propose a two-step estimator obtained as follows:

\begin{itemize}
\item[Step 1:] Estimate $\pi_{0,it}$ by either parametric or nonparametric methods considering $\{ (s_{it},\bm{W}_{ih_i},\bar{\bm{V}}_i) \}$ under different assumptions on the attrition process. This step can accommodate MAR and HW models. We denote the estimate probability by $\hat{\pi}_{it}$.

\item[Step 2:] Let $\lambda_L > 0$. For $\lambda \in [\lambda_L, \infty)$, estimate $\bm{\theta}_0(\tau)$ by finding the argument that minimizes 
\begin{equation}
\arg \min_{\bm{\theta} \in \bm{\Theta}} \sum_{i=1}^N \sum_{t=1}^{T} \left( \frac{s_{it}}{\hat{\pi}_{it}} \rho_{\tau} ( Y_{it} - \bm{X}_{it}' \bm{\theta}) + \lambda \rho_{\tau}(\bm{z}_i' \bm{\alpha}) \right).
\end{equation}
The solution is defined as the weighted penalized quantile regression estimator (WPQR) for an unbalanced panel data model:
\begin{equation}
\hat{\bm{\theta}}(\tau,\hat{\bm{\pi}}) = (\hat{\bm{\vartheta}}(\tau,\hat{\bm{\pi}})', \hat{\bm{\alpha}}(\tau,\hat{\bm{\pi}})')'. \label{wpqr}
\end{equation}
\end{itemize}

In Step 1, it is possible to estimate the probabilities $\pi_{0,it}$ based on an additive non-ignorable model, which contains the MAR mechanism and the model of Hausman and Wise (1979) as special cases. Consider, again for simplicity, a two period panel data and let $\pi_{0,it} = g(Y_{it},\bar{\bm{V}}_i; \gamma)$ where $g(\cdot)$ is a known link function. At $t=1$, $E ( s_{i1}/\pi_{0,i1} - 1 | W_{ih_i}, \bar{\bm{V}}_i) = 0$ is identifiable from the unbalanced data because $W_{ih_i} = W_{i1} = Y_{i1}$ for all $1 \leq i \leq N$. At time $t=2$, identification requires a ``refreshment" sample from population, because $Y_{i2}$ is not observed for some $1 \leq i \leq N$. Under the existing assumptions, an identifying moment is $E (s_{i2} / \pi_{0,i2} - 1 | W_{ih_i}, \bar{\bm{V}}_i) = 0$, where $W_{ih_i} = W_{i2} = Y_{i2}$ if $s_{i2} = 1$ and $W_{ih_i} = Y_{ih_i}$ if $s_{i2}=0$. 

Under the assumption that the propensity score follows a parametric model under Assumption \ref{A0}, $P(s_{it} = 1 | \bm{W}_{it}) = p(\bm{W}_{it}' \bm{\gamma})$ where $s_{it} = 1$ if the data is not missing, $p(\cdot)$ is a known link function and $\bm{\gamma}$ is a vector of unknown parameters. It is straightforward to augment the model with desired transforms of $\bm{V}_{it}$, denoted by $\dot{\bm{V}}_{it}$, and then form $\dot{\bm{W}}_{it}$. For instance, the transforms of the covariates could be equal to a vector of independent variables that includes $\bm{x}_{it}$ and $\bm{x}_{it}^2$ as in Chernozhukov and Hong (2002)\nocite{ChernozhukovHong02}. The parametric estimation of the propensity score can be done using the Manski maximum score method for a model with differences. In cases where the propensity score is unknown, we propose to estimate it using nonparametric or semiparametric methods. Although a root-$n$ consistent estimator can be obtained by Maximum Likelihood, it is possible to prove the 4th-root uniform consistency of a non-parametric estimator for $\pi_0$ as in Galvao, Lamarche and Lima (2013). It can be obtained by applying non-parametric methods (e.g., Kernel or Spline regression or Generalized Additive Models) to data on $s_{it}$ and $\bm{W}_{ih_i}$. 

The procedure can be simply modified to estimate a model with individual location shifts. We estimate the probability of attrition using the method described in Step 1, and then, in Step 2, we estimate $\bm{\theta}_0(\tau)$ as follows: 
\begin{itemize}
\item[Step 2']: For $\lambda \in [\lambda_L,\infty)$, estimate $\bm{\theta}_0(\tau)$ by finding the argument that minimizes
\begin{equation*}
\arg \min_{\bm{\theta} \in \bm{\Theta}} \sum_{j=1}^J \sum_{i=1}^N \sum_{t=1}^T \frac{s_{it}}{\hat{\pi}_{it}} \omega_j \rho_{\tau_j} ( Y_{it} - \bm{X}_{it}' \bm{\theta}) + \lambda \sum_{i=1}^N | \bm{z}_i'  \bm{\alpha} |.
\end{equation*}
where $\omega_j$ is a weight given to the $j$-th quantile $\tau_j \in (0,1)$ and $J$ is the number of quantiles $\{\tau_1,\tau_2,\hdots,\tau_J\}$ simultaneously estimated. 
\end{itemize}

The choice of the weights $\bm{\omega} = (\omega_1,\omega_2,\hdots,\omega_J)'$ is somewhat analogous to the choice of discretely weighted L-statistics (Koenker 2004). At the cost of losing efficiency, a practical alternative is to weight equally all quantiles by setting $\omega_j = J^{-1}$ for all $1 \leq j \leq J$.

\subsection{Asymptotic Theory}\label{section:asymptotics}

We consider the following regularity conditions for the consistency of the proposed estimator. Throughout this section, $\| \cdotp \|_1$ stands for the $\ell_1$-norm.

\begin{assumption}\label{A4}
$\{ (\bm{V}_{it}', Y_{it}) \}$ are independent across individuals and independently and identically distributed (i.i.d.) within each individual.
\end{assumption}
\begin{assumption}\label{A5}
There exists a constant $M$ such that $\max \| \bm{V}_{it} \| < M$, where $\bm{V}_{it} = (\bm{d}_{it}',\bm{x}_{it}')'$.
\end{assumption}
\begin{assumption}\label{A6}
Let $\omega_{it}(\bm{\gamma}) := s_{it} / \pi_{0,it}(\bm{\gamma})$. For each $\eta > 0$, 
\begin{equation*}
\epsilon_\eta := \inf_{i\geq1} \inf_{\| \bm{\theta} \|_1 = \eta} E \left[ \int_0^{\bm{X}_{i1}' \bm{\theta}} \omega_{it}(\bm{\gamma}) \left( F_i(s | \bm{X}_{i1}) - \tau \right) ds + \lambda \int_0^{\bm{z}_i' \bm{\alpha}} \left( G_i(s | \bm{V}_{i1}) - \tau \right) ds \right] > 0,
\end{equation*}
where $F_i$ is defined as a conditional distribution of $u_{it}$ and $G_i$ as the conditional distribution of $\alpha_i$. The distribution of $\alpha_i$ has a zero quantile function conditional on $\bm{V}_{it}$. The conditional densities $f_i$ and $g_i$ are continuous, uniformly bounded away from 0 and $\infty$, with continuous derivatives everywhere.
\end{assumption}

Assumption \ref{A4} is standard and has been used in Fernandez-Val (2005)\nocite{Ivan2005}, Hahn and Newey (2004)\nocite{Newey2004}, Kato, Galvao and Montes-Rojas (2012) and Galvao, Lamarche and Lima (2013). As in Galvao et al., we consider the case of no temporal dependence, and thus, we focus our attention on attrition in static panel quantile models. It is possible to allow dependence across time by applying stochastic inequalities for $\beta$-mixing sequences, as in Theorem 5.1 in Kato, Galvao and Montes-Rojas (2012). We shall stress that the restriction that $T$ grows at most polynomially in $N$ does not change and the dependence case leads, as expected, to a different asymptotic covariance matrix than the one obtained under Assumption \ref{A4}. Assumption \ref{A5} is also common in the literature (see, e.g., Koenker 2004, Lamarche 2010) and is important for the finite dimensional uniform convergence of the objective function. This assumption can be relaxed using a moment condition as in Fernandez-Val (2005) and Kato, Galvao and Montes-Rojas (2012). Condition \ref{A6} is an identification condition and is similar to Assumption (A3) in Kato, Galvao and Montes-Rojas (2012) and Condition 3 in Hahn and Newey (2004) when $\lambda \to 0$. The second term leads to point identification and it is similar to Condition A2 in Lamarche (2010). 

The following result states the consistency of the estimator:

\begin{theorem}\label{T1}
Under Assumptions \ref{A1}, \ref{A2}, \ref{A0}, \ref{A4}, \ref{A5}, and \ref{A6}, as $N$ and $T$ goes jointly to infinity with $\log(N)/T \to 0$, the weighted penalized quantile regression estimator (WPQR) for an unbalanced panel data model, $\hat{\bm{\vartheta}}(\tau,\hat{\bm{\pi}})$, is consistent. 
\end{theorem}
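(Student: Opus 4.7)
The plan is to prove consistency by a triangular argument: first establish consistency of the oracle estimator $\tilde{\bm\vartheta}(\tau,\bm\pi_0)$ that uses the true propensity scores, and then show that the two-step error from replacing $\bm\pi_0$ by $\hat{\bm\pi}$ is asymptotically negligible.

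For the oracle step I would follow the roadmap of Kato, Galvao, and Montes-Rojas (2012) adapted to the IPW-weighted, penalized objective. Using the Knight identity
\[
\rho_\tau(u-v)-\rho_\tau(u)=-v\,\psi_\tau(u)+\int_0^v\bigl[\mathbf{1}\{u\le s\}-\mathbf{1}\{u\le 0\}\bigr]\,ds,
\]
the centered criterion $Q_{NT}(\bm\theta,\bm\pi_0)-Q_{NT}(\bm\theta_0,\bm\pi_0)$ splits into a linear score term (whose expectation at $\bm\theta_0$ vanishes by the quantile restriction on $u_{it}$ and the zero-quantile condition on $\alpha_i$ built into Assumption \ref{A6}) and a nonnegative remainder whose population expectation is bounded below by $\epsilon_\eta>0$ on $\{\|\bm\theta-\bm\theta_0\|_1\ge\eta\}$. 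Combined with a uniform law of large numbers, this forces $\|\hat{\bm\vartheta}(\tau,\bm\pi_0)-\bm\vartheta_0\|_1=o_p(1)$.

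The main obstacle is the uniform convergence over the $N$-dimensional $\bm\alpha$-coordinates, and this is where the rate $\log(N)/T\to 0$ is essential. I would cover the parameter space by an $\varepsilon_N$-net of log-cardinality $O(N\log(1/\varepsilon_N))$, apply Hoeffding's inequality at each grid point using Assumption \ref{A5} together with the lower bound $\pi_{0,it}>\sigma$ from Assumption \ref{A1} (which bounds $\omega_{it}(\bm\gamma)\bm{X}_{it}\psi_\tau(\cdot)$ uniformly), and conclude via a union bound; $\log(N)/T\to 0$ is exactly the rate needed for the exponential tails to dominate the grid cardinality. The penalty with $\lambda\ge\lambda_L>0$ is indispensable here, as it keeps the intercept jointly identified with $\bm\alpha$ and shrinks a block of $\hat\alpha_i$'s exactly to zero, reducing the effective dimension.

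Finally, I would handle the plug-in by linearization. Under Assumption \ref{A2} the first-step estimator $\hat{\bm\gamma}$ is consistent, and Assumption \ref{A0} guarantees that the auxiliary sample $\{\bm{W}_{ih_i}\}$ identifies the same conditional probability as $\{\bm{W}_{it}\}$, so $\max_{i,t}|\hat\pi_{it}-\pi_{0,it}|=o_p(1)$. Writing
\[
Q_{NT}(\bm\theta,\hat{\bm\pi})-Q_{NT}(\bm\theta,\bm\pi_0)=\frac{1}{NT}\sum_{i,t}\frac{s_{it}(\pi_{0,it}-\hat\pi_{it})}{\hat\pi_{it}\pi_{0,it}}\rho_\tau(Y_{it}-\bm{X}_{it}'\bm\theta),
\]
the difference is bounded in absolute value by $\sigma^{-2}\max_{i,t}|\hat\pi_{it}-\pi_{0,it}|\cdot(NT)^{-1}\sum_{i,t}\rho_\tau(Y_{it}-\bm{X}_{it}'\bm\theta)$, which is $o_p(1)$ uniformly on compacta. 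Therefore $\hat{\bm\vartheta}(\tau,\hat{\bm\pi})$ inherits the consistency of the oracle estimator, completing the argument.
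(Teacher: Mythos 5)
Your overall architecture matches the paper's: an oracle/plug-in decomposition, Knight's identity plus the identification lower bound $\epsilon_\eta$ from Assumption \ref{A6}, a Hoeffding-plus-covering argument for uniformity, and a final step showing the estimated weights do not matter. However, there is a genuine gap in the uniformity step as you describe it. Covering the full parameter space $\mathcal{A}^N\times\mathcal{B}\times\mathcal{D}$ with an $\varepsilon_N$-net of log-cardinality $O(N\log(1/\varepsilon_N))$ and applying Hoeffding at each grid point does \emph{not} close under the condition $\log(N)/T\to 0$: each per-individual empirical process is an average of only $T$ terms, so Hoeffding yields tails of order $\exp(-cT)$, and a union bound over your net would require $N\log(1/\varepsilon_N)/T\to 0$ --- far stronger than the stated rate, and violated in the large-$N$, moderate-$T$ regime the theorem is designed for. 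The paper (following Kato, Galvao and Montes-Rojas) avoids this by exploiting the separability of the objective across individuals: it works with the per-individual criterion $\MM_{N_i}(\bm{\delta},\bm{\beta},\alpha_i)$, restricts attention to a \emph{fixed}-dimensional ball $\mathcal{B}_i(\eta)\subset\RR^{p_x+p_d+1}$ in $(\bm{\delta},\bm{\beta},\alpha_i)$, uses convexity to pull points back to the boundary $\partial\mathcal{B}_i(\eta)$, covers that compact set with a constant number $K$ of balls, and then only needs $\max_{1\le i\le N}P\{\sup_{\bm{\theta}\in\mathcal{B}_i}|\Delta_{N_i}(\bm{\theta})-\EE(\Delta_{N_i}(\bm{\theta}))|\ge\epsilon\}=o(N^{-1})$, i.e. $K\exp(-DT)=o(N^{-1})$, which is exactly where $\log(N)/T\to 0$ enters. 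Relatedly, the paper does not use the exact-zero shrinkage of the $\hat\alpha_i$'s to reduce dimension in this step; the penalty enters only through convexity and the second term of Assumption \ref{A6}.

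Two smaller points. First, in your plug-in step, uniform $o_p(1)$ closeness of $Q_{NT}(\cdot,\hat{\bm{\pi}})$ and $Q_{NT}(\cdot,\bm{\pi}_0)$ does not by itself transfer consistency from the oracle minimizer to the feasible one; the paper invokes its Lemma 1 (closeness of minimizers of nearby \emph{convex} functions) to make that inference rigorous, and you should do the same rather than asserting inheritance. Second, the paper centers the comparison as $(Q_{NT}(\bm{\theta},\bm{\gamma}_0)-Q_{NT}(\bm{\theta}_0,\bm{\gamma}_0))-(Q_{NT}(\bm{\theta},\hat{\bm{\gamma}})-Q_{NT}(\bm{\theta}_0,\hat{\bm{\gamma}}))=o_p(1)$, which is the form actually needed for the convexity lemma; your uncentered bound via $\sigma^{-2}\max_{i,t}|\hat\pi_{it}-\pi_{0,it}|$ is compatible with this but should be stated in the centered form.
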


The result shows that the weighted quantile regression estimator is consistent. The result is shown using the arguments in Theorem 3.1 in Kato, Galvao and Montes-Rojas (2012) and Theorem 1 in Galvao, Lamarche and Lima (2013). The restriction on the growth of $T$, which should be denoted by $T_N$ because it depends on the number of subjects, is similar to the literature. Improvements based on $\lambda$ selection is out of the scope of this paper.

For the convergence in distribution of the proposed estimator, consider the following additional conditions.

\begin{assumption}\label{A8} There exists positive definite matrices $\bm{D}_0$ and $\bm{D}_1$ such that:
\begin{eqnarray*}
\bm{D}_{0} & = & \lim_{N \to \infty} \frac{1}{N} \sum_{i=1}^N E \left\{  \left( \tilde{\bm{V}}_{it} - \tilde{\bm{E}}_i \varphi_i^{-1} \right) \left( \tilde{\bm{V}}_{it} - \tilde{\bm{E}}_i \varphi_i^{-1} \right)' - \left( \tilde{\bm{E}}_i (\lambda/\varphi_i) \right) \left(\tilde{\bm{E}}_i (\lambda/\varphi_i) \right)' \right\}, \\
\bm{D}_{1} & = & \lim_{N \to \infty} \frac{1}{N} \sum_{i=1}^N  \left( \tilde{\bm{J}}_i - \tilde{\bm{E}}_i \varphi_i^{-1} \tilde{\bm{E}}_i' \right),  
\end{eqnarray*}
where $\tilde{\bm{V}}_{it} = [s_{it} / \pi_{0,it}(\bm{\gamma})] \bm{V}_{it}$, $\tilde{\bm{E}}_{i} = E( f_i(0 | \bm{X}_{it}) \tilde{\bm{V}}_{it}$), $e_i = E( [s_{it} / \pi_{0,it}(\bm{\gamma})] f_i(0 | \bm{V}_{it}))$, $g_i = E( g_i(0 | \bm{X}_{it}) )= E( g_i(0) )$, $\varphi_i := e_i - \lambda g_i / \sqrt{T}$, and $\tilde{\bm{J}}_i = E( f_i(0 | \bm{X}_{it}) \tilde{\bm{V}}_{it} \tilde{\bm{V}}_{it}')$. 
\end{assumption}
\begin{assumption}\label{A7}
Let $\lambda_T$ be a given tuning parameter for a panel data model with $T$ observations for each subject. Then, the regularization parameter $\lambda_{T} / \sqrt{T} \to \lambda > 0$.  
\end{assumption}

Assumption \ref{A8} is standard in the quantile regression literature and it implies that the limiting matrices exists and are non-singular. It is also implicitly assumed that the minimum eigenvalue of $\bm{D}_{1,N}$ is bounded away from zero uniformly over $N \geq 1$.  The matrices are similar to the ones in Condition (B3) in Kato, Galvao and Montes-Rojas (2012) and Condition B6 in Galvao, Lamarche and Lima (2013) when $s_{it} = 1$ for all $t > 1$ and $\lambda \to 0$. Lastly, Assumption \ref{A7} is a condition used for penalized estimators and it has been previously assumed in Knight and Fu (2000)\nocite{knight2000} and Koenker (2004) to achieve square root-$n$ consistency for the penalized estimator. As shown in the proof of Theorem \ref{T2}, the rate of growth of $\lambda_T$ determines a limiting distribution of the penalized estimator that is different than the fixed effects quantile regression estimator. For asymptotic normality, we require $\lambda_T = O(\sqrt{T})$, although for consistency, $\lambda_T$ can grow faster.

The following result obtains the asymptotic distribution of the proposed estimator:

\begin{theorem}\label{T2}
Under the conditions of Theorem \ref{T1} and Assumptions \ref{A8} and \ref{A7}, provided that $N^2 (\log(N))^3/T \to 0$ as $N$ and $T$ go jointly to infinity, the weighted penalized quantile regression estimator (WPQR) for an unbalanced panel data model, $\hat{\bm{\vartheta}}(\tau,\hat{\bm{\pi}})$, converges in distribution to a Gaussian random vector with mean $\bm{\vartheta}(\tau)$ and covariance matrix $\tau (1 - \tau) \bm{D}_{1}^{-1} \bm{D}_{0} \bm{D}_{1}^{-1}$.
\end{theorem}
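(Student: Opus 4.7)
Given the consistency result from Theorem~\ref{T1}, the plan is to obtain a Bahadur-type expansion of $\hat{\bm{\vartheta}}(\tau,\hat{\bm{\pi}})$ around $\bm{\vartheta}(\tau)$ via a quadratic approximation of the convex weighted penalized objective \eqref{eq:qrweights1}, and then identify the limiting distribution after analytically profiling out the individual effects. The blueprint I would follow is that of Theorem~3.1 in Kato, Galvao and Montes-Rojas (2012) and Theorem~2 in Galvao, Lamarche and Lima (2013), modified to accommodate (i) the inverse-probability weights $s_{it}/\pi_{0,it}$, (ii) the shrinkage penalty on the $\bm{\alpha}$-vector, and (iii) the first-step plug-in estimator $\hat{\bm{\pi}}$.

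\textbf{Reparameterization and Knight's identity.} I would reparameterize as $\bm{\vartheta}=\bm{\vartheta}_0+\bm{u}/\sqrt{NT}$ and $\bm{\alpha}=\bm{\alpha}_0+\bm{v}/\sqrt{T}$; the slower $\sqrt{T}$-rate for the $N$-dimensional $\bm{\alpha}$ reflects the fact that each $\alpha_i(\tau)$ is identified only by its own $T$ observations. Applying Knight's identity to \eqref{eq:qrweights1} decomposes the centered objective into (a) a linear score term $\bm{W}_{NT}(\bm{u},\bm{v})$, (b) a deterministic quadratic term $B_{NT}(\bm{u},\bm{v})$ that converges by the LLN under Assumption~\ref{A8}, and (c) a stochastic remainder $R_{NT}(\bm{u},\bm{v})$. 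The key technical input is a maximal inequality giving $\sup_{\|(\bm u,\bm v)\|\le K}|R_{NT}|\inpr 0$; this step is exactly what forces the growth requirement $N^2(\log N)^3/T\to 0$, since the relevant parameter space has dimension on the order of $N$ and the weighted empirical-process fluctuations must be controlled uniformly over it.

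\textbf{Score, profiling, and plug-in.} The slope piece of $\bm{W}_{NT}$ carries the weighted score $\tilde{\bm V}_{it}\psi_\tau(u_{it}(\tau))$ and the incidental-effect piece carries $\lambda\bm{z}_i\psi_\tau(\alpha_{i0}(\tau))$; Assumptions~\ref{A4}--\ref{A5} and a triangular-array Lindeberg--Feller CLT give joint asymptotic normality, whose slope marginal has variance $\tau(1-\tau)\bm{D}_0$ after the penalty score is absorbed as in Assumption~\ref{A8}. I would concentrate out $\bm{v}$ analytically by solving $\partial B_{NT}/\partial\bm{v}=0$ in the limit: the minimizing $\bm{v}_i^\star(\bm{u})$ is linear in $\bm{u}$ with coefficients matching the $\tilde{\bm E}_i\varphi_i^{-1}$ and $\lambda\tilde{\bm E}_i/\varphi_i$ adjustments that appear in $\bm{D}_0$ and $\bm{D}_1$, yielding a profiled quadratic in $\bm{u}$ with Hessian $\bm{D}_1$. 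The Hjort--Pollard convexity argument then delivers $\sqrt{NT}(\hat{\bm\vartheta}(\tau,\bm\pi_0)-\bm\vartheta(\tau))\indist\mathcal{N}(\bm{0},\tau(1-\tau)\bm D_1^{-1}\bm D_0\bm D_1^{-1})$. To pass from $\bm{\pi}_0$ to $\hat{\bm\pi}$, I would expand $\hat{\bm\pi}-\bm{\pi}_0$ by a one-term Taylor expansion around $\bm\gamma_0$ using Assumption~\ref{A2}; the root-$n$ (or fourth-root uniform) consistency of the first step noted in the text, combined with Assumption~\ref{A0}, renders the induced perturbation in the score asymptotically negligible after standardization, so the limit distribution is unchanged.

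\textbf{Main obstacle.} The hardest step is the uniform control of $R_{NT}$ over an $N$-dependent parameter set in the simultaneous presence of (a) the non-smooth $\psi_\tau$, (b) the plug-in weights, and (c) the $N$ incidental-parameter directions interacting with the penalty. Chaining and bracketing arguments for the weighted empirical process $(i,t)\mapsto s_{it}\pi_{0,it}^{-1}\psi_\tau(Y_{it}-\bm X_{it}'\bm\theta)$ are what pin down the polynomial rate $N^2(\log N)^3/T\to 0$, and the scaling $\lambda_T/\sqrt T\to\lambda>0$ of Assumption~\ref{A7} is essential so that the penalty contributes a finite, non-degenerate correction through $\varphi_i$ in the limit Hessian rather than either dominating it or washing out.
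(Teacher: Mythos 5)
Your proposal is correct in substance and arrives at the right limit, but it takes a recognizably different route from the paper. The paper works entirely at the level of the estimating equations: it defines the two score processes $\mathbb{H}_{N_i}^{(1)}$ (for each $\alpha_i$) and $\mathbb{H}_{N}^{(2)}$ (for the slopes), invokes the Gutenbrunner--Jure\v{c}kov\'a computational property that the empirical score at the minimizer is $O_p(T^{-1})$, expands the expected scores $H_{N_i}^{(1)},H_N^{(2)}$ around $\bm{\theta}_0$, solves the linearized system to eliminate $\hat{\alpha}_i-\alpha_{i0}$, and then cites Steps 2--3 of Theorem 3.2 in Kato, Galvao and Montes-Rojas (2012) to bound the remainder terms under $N^2(\log N)^3/T\to 0$ before applying the Liapunov CLT. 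You instead propose the objective-function route: local reparameterization at rates $\sqrt{NT}$ and $\sqrt{T}$, Knight's identity to split the centered criterion into score, quadratic, and remainder, a maximal inequality over the $N$-dimensional nuisance directions, analytic profiling of $\bm{v}$ out of the limiting quadratic, and the Hjort--Pollard convexity lemma. Both routes produce the same $\varphi_i=e_i-\lambda_T g_i/T$ correction, the same $\tilde{\bm{E}}_i\varphi_i^{-1}$ profiling adjustments in $\bm{D}_0$ and $\bm{D}_1$, and the same sandwich covariance; your observation that the rate condition is exactly what controls the uniform fluctuation over the $N$-dependent parameter set is the same role it plays in the paper, just relocated from the score remainders to the objective remainder. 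What the paper's approach buys is that the remainder rates can be read off term by term from the already-established bounds in Kato et al.; what yours buys is that convexity lets you avoid manipulating the non-smooth subgradient directly and reduces the final step to pointwise convergence of a quadratic. On the plug-in step the two treatments differ slightly in mechanism but not in depth: the paper bounds $\frac{1}{NT}\sum_{i,t}|\omega_{it}(\hat{\bm{\gamma}})-\omega_{it}(\bm{\gamma}_0)|$ via Lemma 4 of Galvao, Lamarche and Lima (2013), while you propose a Taylor expansion in $\bm{\gamma}$; either way the burden falls on the first-stage rate, and neither argument as sketched fully verifies that the perturbation is $o_p((NT)^{-1/2})$ after scaling, so you are not missing anything the paper supplies.
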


The components of the asymptotic covariance matrices in Theorem \ref{T2} can be estimated using standard methods (Koenker 2005, \S3), and therefore, they will not be discussed in this article. The proof of Theorem \ref{T2} is based on a parametric first stage as in Nevo (2003) but the result can be extended to estimating the propensity score by non-parametric methods as in Tang et. al. (2012)\nocite{jW12}. In this case, we need assumptions on the smoothness of the propensity score function and conditions on bounded support and derivatives. Also, it requires that $\sup_i \| \hat{\pi}_{it} - \pi_{i0} \|_{\infty} = o_p(T^{-1/4})$, where $\| \hat{\pi}_{it} - \pi_{i0} \|_{\infty} = \sup_{z \in \mathcal{Z}} | \hat{\pi}(z) - \pi_0(z) |$ for a generic vector $z$ and a given function $\pi(\cdot)$.

\section{Simulation Studies}
This section reports the results of several simulation experiments designed to evaluate the performance of the method in finite samples. First, we investigate the small sample performance of the penalized estimator relative to the existing fixed effects estimator in cases with and without missing data. Second, we briefly investigate the bias and root mean square error (RMSE) of the estimator in models with endogenous individual effects. We are especially interested in comparing the performance of the method with respect to existing quantile regression estimators. Finally, we will contrast the performance of the quantile regression estimator in the case of selection on unobservables using refreshment samples. 

We focus on the simulation experiments that can lead to close comparisons of results with the one obtained by Kato, Galvao, and Montes-Rojas (2012) and Kyriazidou (1997) in the presence of missing data. We generate the dependent variable as:
\begin{eqnarray}
y_{it} & = & s_{it}  (\alpha_i + \beta_0 + \beta_1 x_{it} + (1 + \gamma x_{it}) u_{it}), \label{kgm1} \\
s_{it} & = & 1\{ \rho_0 y_{it}^\ast + \rho_1 y_{it-1} + \theta_1 x_{it} + \theta_2 \alpha_i - v_{it} > 0 \}  \label{kgm2}  \\
x_{it} & = & \pi \alpha_i + z_{it}, \label{kgm3}
\end{eqnarray}
where $s_{i1} = 1$ for all $i$, $z_{it} \sim \chi_3^2$, and $\alpha_i \sim \mathcal{U}[0,1]$. The distribution of the error term $u_{it}$ is i.i.d. $\chi_3^2$ or Cauchy. Then, the distribution of $u_{it}$ is changed in the simulation designs following closely Kato, Galvao, and Montes-Rojas (2012), although we do not consider the Normal case because the bias of the fixed effects quantile regression estimator is negligible in the simulations. It is assumed that $\beta_0 = 0$, $\beta_1 = 1$, $\gamma = 0.5$ and $\pi = 0.3$ to obtain the data generating process considered in Kato, Galvao, and Montes-Rojas (2012). In models with missing data, as in Kyriazidou (1997), the error term $v_{it}$ is distributed as logistic and the parameter of interest is $\beta_1$ in equation \eqref{kgm1}. The number of Monte Carlo experiments is 1000. 

\begin{figure}
\begin{center}
\centerline{\includegraphics[width=.9\textwidth]{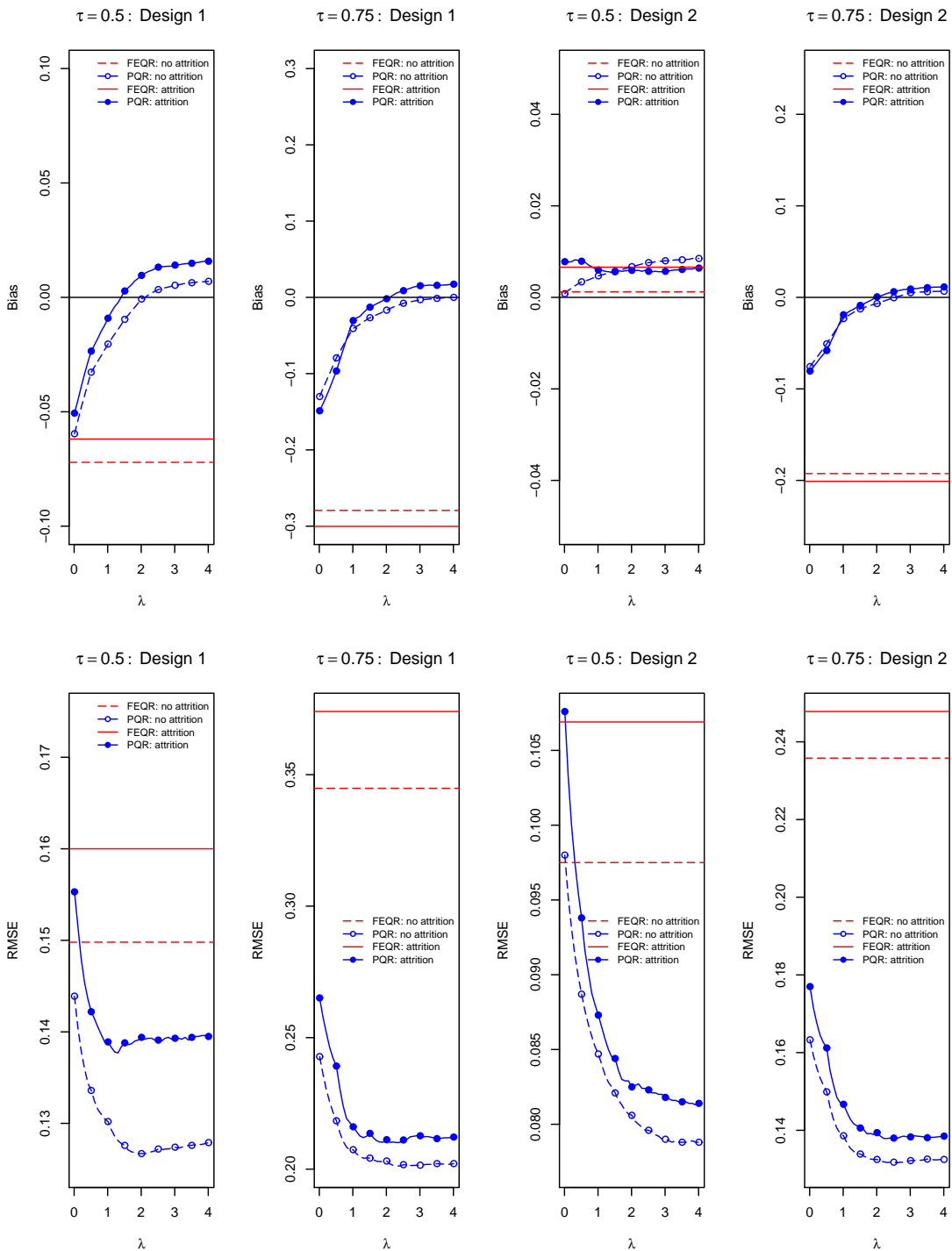}}
\caption{\emph{Small sample performance of the fixed effects estimator (FEQR) and penalized estimator (PQR) under Designs 1 and 2.}  \label{mc-fig1}}
\end{center}
\end{figure}

\subsection{Case 1: Shrinkage and missing data} We begin by emphasizing the difference between the fixed effects estimator and the penalized estimator for a model with individual effects. The fixed effects estimator was proposed in Koenker (2004) and further analyzed in Kato, Galvao, and Montes-Rojas (2012). The penalized quantile estimator is also proposed in Koenker (2004) and further investigated in Lamarche (2010). These estimators are different, and their differences lead to different small and large sample performances. The penalized estimator was introduced as a way of reducing the influence of the nuisance parameters in models with large $N$ and small $T$. Consider the following variations of the model \eqref{kgm1}-\eqref{kgm3} for $N=200$ and $T=5$:

\begin{description}
\item[Design 1.a (no attrition)] We focus on the case of no missing data assuming that $s_{it}=1$ for all $(i,t)$. The distribution of the error term $u_{it}$ is assumed to be $\chi_3^2$ (Table 4 in Kato, Galvao, and Montes-Rojas 2012).     
\item[Design 1.b (attrition)] The design is similar to Design 1.a but we generate missing data following equation \eqref{kgm2}. We assume $\theta_1 = \theta_2 = 1$ and $\rho_0 = \rho_1 = 0$, generating an average proportion of missing data of 15.4\%.
\item[Design 2.a (no attrition)] We focus on the case of no missing data assuming that $s_{it}=1$ for all $(i,t)$. We assume that the distribution of the error term $u_{it}$ is Cauchy as in Table 4 in Kato, Galvao, and Montes-Rojas (2012). 
\item[Design 2.b (attrition)] The design is similar to Design 2.a but, as in Design 1.b, we generate missing data following equation \eqref{kgm2}. We assume $\theta_1 = \theta_2 = 1$ and $\rho_0 = \rho_1 = 0$, generating an average proportion of missing data of 15.6\%. 
\end{description}

In this section, we compare the performance of the following estimators: (1) the fixed effects estimator defined in equation \eqref{eq:objectivef} (FEQR) and (2) the penalized quantile regression estimator (PQR) defined in \eqref{wpqr} but assuming that $\hat{\pi}_{it} = 1$ for all $(i,t)$. We report evidence on the performance for the penalized estimator for a series of tuning parameter values, $\lambda \in (0,4]$. Figure \ref{mc-fig1} presents the small sample performance of the fixed effects estimator and penalized estimator under Designs 1 and 2, with and without attrition.

In models without attrition, Figure \ref{mc-fig1} shows evidence quite consistent with Kato, Galvao, and Montes-Rojas (2012). The panels show that the fixed effects quantile regression (FEQR) estimator suffers from substantial bias. The extent of the bias varies with the quantiles at which the estimator is applied and the extent is determined by the specific distributional assumption of the error term. Note in particular that the bias of the fixed effects estimator can be as large as 28\% ($\tau=0.75$). 

In contrast, the penalized quantile regression estimator (PQR) corresponding to $\lim_{\lambda \to 0} \hat{\bm{\beta}}(\lambda)$ reduces the bias for increasing values of $\lambda$ (within the range considered in the simulations). Note in particular that even small increases of the penalty lead to very substantial improvements in both the bias profile and the RMSE. This further emphasizes that there are important distinctions between the FEQR and the PQR estimators. 

The significant gains from using PQR are present across Design 1.a and Design 2.a. It is particularly noteworthy that for some quantiles under some specifications the bias is nearly zero for the FEQR estimator, while at different quantiles the bias can be quite large either positive or negative. At the same time it is true that the PQR estimator has near zero bias in the cases where the FEQR estimator works well too, while almost completely removing the bias in the cases where the FEQR performs poorly. In all scenarios presented in our figures the RMSE decreases sharply and it is the case that a value of the tuning parameter exists such that the PQR dominates FEQR from both a mean and a RMSE perspective. 

The results for the case of missing data are also described in Figure \ref{mc-fig1}. We expect a slightly larger bias and higher RMSE than in Designs 1.a and 2.a because the models are estimated without the inverse propensity score weighting. This is exactly what we observe. For instance, the bias of FEQR now reaches 30\% and the PQR estimator exhibits small positive biases for large values of $\lambda$ at the 0.75 quantile. In terms of RMSE, we see an increase relative to the case of no attrition across different values of the tuning parameter $\lambda$. However, we continue to see that that even a small penalty leads to very substantial improvements in both the bias profile and the RMSE. 

\begin{table}
\begin{center}
\begin{tabular}{ cccccccccc } \hline
N & T & $\tau$ &  & \multicolumn{6}{c}{Quantile Regression Panel Data Methods}  \\
        &               &               &  &    QR      &       WQR     &       FE      &       WFE     &       PQR     &       WPQR        \\ \hline
200     &       5       &       0.50    &       Bias    &       0.172   &       0.194   &       -0.076  &       -0.072  &       0.115   &       0.045   \\
        &               &               &       RMSE    &       0.229   &       0.244   &       0.199   &       0.190   &       0.191   &       0.162   \\
200     &       25      &       0.50    &       Bias    &       0.160   &       0.172   &       -0.022  &       -0.022  &       0.074   &       0.009   \\
        &               &               &       RMSE    &       0.184   &       0.195   &       0.104   &       0.102   &       0.118   &       0.097   \\
500     &       5       &       0.50    &       Bias    &       0.178   &       0.196   &       -0.070  &       -0.072  &       0.124   &       0.049   \\
        &               &               &       RMSE    &       0.197   &       0.213   &       0.127   &       0.124   &       0.150   &       0.102   \\
500     &       25      &       0.50    &       Bias    &       0.151   &       0.167   &       -0.032  &       -0.031  &       0.066   &       0.001   \\
        &               &               &       RMSE    &       0.162   &       0.176   &       0.071   &       0.069   &       0.088   &       0.061   \\ \hline
200     &       5       &       0.75    &       Bias    &       0.165   &       0.191   &       -0.363  &       -0.215  &       0.088   &       0.022   \\
        &               &               &       RMSE    &       0.275   &       0.283   &       0.437   &       0.322   &       0.234   &       0.217   \\
200     &       25      &       0.75    &       Bias    &       0.161   &       0.172   &       -0.123  &       -0.055  &       0.076   &       0.006   \\
        &               &               &       RMSE    &       0.217   &       0.223   &       0.199   &       0.163   &       0.165   &       0.148   \\
500     &       5       &       0.75    &       Bias    &       0.183   &       0.200   &       -0.335  &       -0.202  &       0.098   &       0.031   \\
        &               &               &       RMSE    &       0.228   &       0.241   &       0.371   &       0.254   &       0.168   &       0.142   \\
500     &       25      &       0.75    &       Bias    &       0.148   &       0.165   &       -0.130  &       -0.055  &       0.062   &       0.003   \\
        &               &               &       RMSE    &       0.177   &       0.191   &       0.165   &       0.117   &       0.113   &       0.101   \\ \hline
200     &       5       &       0.90    &       Bias    &       0.146   &       0.182   &       -1.243  &       -0.729  &       0.029   &       0.020   \\
        &               &               &       RMSE    &       0.390   &       0.375   &       1.275   &       0.796   &       0.358   &       0.308   \\
200     &       25      &       0.90    &       Bias    &       0.145   &       0.155   &       -0.378  &       -0.126  &       0.055   &       0.007   \\
        &               &               &       RMSE    &       0.275   &       0.277   &       0.460   &       0.282   &       0.240   &       0.235   \\
500     &       5       &       0.90    &       Bias    &       0.183   &       0.201   &       -1.216  &       -0.724  &       0.063   &       0.040   \\
        &               &               &       RMSE    &       0.290   &       0.296   &       1.230   &       0.752   &       0.224   &       0.198   \\
500     &       25      &       0.90    &       Bias    &       0.142   &       0.159   &       -0.384  &       -0.122  &       0.054   &       0.016   \\
        &               &               &       RMSE    &       0.211   &       0.223   &       0.419   &       0.206   &       0.158   &       0.159   \\ \hline
\end{tabular}
\vspace{3mm}
\caption{\emph{Small sample performance of panel quantile methods in Design 3.} \label{mcI}} 
\end{center}
\end{table}

\begin{table}
\begin{center}
\begin{tabular}{ cccccccccc } \hline
N & T & $\tau$ &  & \multicolumn{6}{c}{Quantile Regression Panel Data Methods}  \\
        &               &               &  &    QR      &       WQR     &       FE      &       WFE     &       PQR     &       WPQR        \\ \hline
200     &       5       &       0.50    &       Bias    &       0.219   &       0.220   &       -0.056  &       -0.056  &       0.160   &       0.110   \\
        &               &               &       RMSE    &       0.256   &       0.256   &       0.169   &       0.172   &       0.209   &       0.183   \\
200     &       25      &       0.50    &       Bias    &       0.170   &       0.172   &       -0.021  &       -0.022  &       0.077   &       0.044   \\
        &               &               &       RMSE    &       0.187   &       0.188   &       0.090   &       0.090   &       0.110   &       0.095   \\
500     &       5       &       0.50    &       Bias    &       0.219   &       0.220   &       -0.054  &       -0.053  &       0.160   &       0.111   \\
        &               &               &       RMSE    &       0.233   &       0.234   &       0.115   &       0.115   &       0.179   &       0.142   \\
500     &       25      &       0.50    &       Bias    &       0.171   &       0.173   &       -0.022  &       -0.022  &       0.078   &       0.045   \\
        &               &               &       RMSE    &       0.179   &       0.180   &       0.060   &       0.060   &       0.094   &       0.071   \\  \hline
200     &       5       &       0.75    &       Bias    &       0.209   &       0.209   &       -0.315  &       -0.334  &       0.123   &       0.012   \\
        &               &               &       RMSE    &       0.290   &       0.291   &       0.401   &       0.416   &       0.234   &       0.221   \\
200     &       25      &       0.75    &       Bias    &       0.162   &       0.164   &       -0.103  &       -0.109  &       0.067   &       0.013   \\
        &               &               &       RMSE    &       0.203   &       0.204   &       0.170   &       0.174   &       0.142   &       0.130   \\
500     &       5       &       0.75    &       Bias    &       0.210   &       0.212   &       -0.314  &       -0.328  &       0.128   &       0.026   \\
        &               &               &       RMSE    &       0.244   &       0.245   &       0.348   &       0.361   &       0.177   &       0.133   \\
500     &       25      &       0.75    &       Bias    &       0.169   &       0.170   &       -0.100  &       -0.106  &       0.071   &       0.018   \\
        &               &               &       RMSE    &       0.188   &       0.190   &       0.135   &       0.139   &       0.108   &       0.089   \\  \hline
200     &       5       &       0.90    &       Bias    &       0.194   &       0.192   &       -1.196  &       -1.219  &       0.075   &       -0.051  \\
        &               &               &       RMSE    &       0.394   &       0.394   &       1.230   &       1.252   &       0.343   &       0.342   \\
200     &       25      &       0.90    &       Bias    &       0.168   &       0.170   &       -0.296  &       -0.308  &       0.069   &       -0.003  \\
        &               &               &       RMSE    &       0.264   &       0.264   &       0.370   &       0.379   &       0.216   &       0.208   \\
500     &       5       &       0.90    &       Bias    &       0.204   &       0.205   &       -1.183  &       -1.207  &       0.083   &       -0.035  \\
        &               &               &       RMSE    &       0.282   &       0.282   &       1.196   &       1.220   &       0.212   &       0.198   \\
500     &       25      &       0.90    &       Bias    &       0.161   &       0.164   &       -0.303  &       -0.314  &       0.064   &       -0.013  \\
        &               &               &       RMSE    &       0.203   &       0.205   &       0.331   &       0.340   &       0.133   &       0.130   \\  \hline
\end{tabular}
\vspace{3mm}
\caption{\emph{Small sample performance of panel quantile methods in Design 4.} \label{mcII}} 
\end{center}
\end{table}

\begin{table}
\begin{center}
\begin{tabular}{ cccccccccc } \hline
N & T & $\tau$ &  & \multicolumn{6}{c}{Quantile Regression Panel Data Methods}  \\
        &               &               &  &    QR      &       WQR     &       FE      &       WFE     &       PQR     &       WPQR        \\ \hline
200     &       5       &       0.50    &       Bias    &       0.232   &       0.258   &       0.001   &       -0.004  &       0.124   &       0.046   \\
        &               &               &       RMSE    &       0.239   &       0.269   &       0.061   &       0.071   &       0.135   &       0.082   \\
200     &       25      &       0.50    &       Bias    &       0.224   &       0.256   &       0.000   &       0.001   &       0.037   &       0.014   \\
        &               &               &       RMSE    &       0.226   &       0.259   &       0.028   &       0.035   &       0.045   &       0.037   \\
500     &       5       &       0.50    &       Bias    &       0.229   &       0.259   &       -0.001  &       -0.004  &       0.124   &       0.047   \\
        &               &               &       RMSE    &       0.232   &       0.263   &       0.038   &       0.047   &       0.128   &       0.064   \\
500     &       25      &       0.50    &       Bias    &       0.223   &       0.254   &       0.000   &       0.001   &       0.037   &       0.013   \\
        &               &               &       RMSE    &       0.224   &       0.255   &       0.017   &       0.021   &       0.040   &       0.025   \\  \hline
200     &       5       &       0.75    &       Bias    &       0.216   &       0.242   &       -0.063  &       -0.117  &       0.102   &       -0.024  \\
        &               &               &       RMSE    &       0.225   &       0.254   &       0.094   &       0.140   &       0.119   &       0.077   \\
200     &       25      &       0.75    &       Bias    &       0.210   &       0.241   &       -0.011  &       -0.021  &       0.029   &       -0.004  \\
        &               &               &       RMSE    &       0.213   &       0.245   &       0.033   &       0.042   &       0.040   &       0.036   \\
500     &       5       &       0.75    &       Bias    &       0.217   &       0.243   &       -0.064  &       -0.114  &       0.103   &       -0.023  \\
        &               &               &       RMSE    &       0.221   &       0.248   &       0.078   &       0.125   &       0.110   &       0.052   \\
500     &       25      &       0.75    &       Bias    &       0.211   &       0.238   &       -0.011  &       -0.021  &       0.029   &       -0.005  \\
        &               &               &       RMSE    &       0.212   &       0.240   &       0.022   &       0.031   &       0.033   &       0.023   \\  \hline
200     &       5       &       0.90    &       Bias    &       0.202   &       0.229   &       -0.292  &       -0.349  &       0.070   &       -0.070  \\
        &               &               &       RMSE    &       0.220   &       0.249   &       0.303   &       0.359   &       0.101   &       0.111   \\
200     &       25      &       0.90    &       Bias    &       0.203   &       0.230   &       -0.038  &       -0.062  &       0.019   &       -0.027  \\
        &               &               &       RMSE    &       0.207   &       0.234   &       0.054   &       0.077   &       0.037   &       0.053   \\
500     &       5       &       0.90    &       Bias    &       0.207   &       0.231   &       -0.292  &       -0.344  &       0.075   &       -0.066  \\
        &               &               &       RMSE    &       0.213   &       0.237   &       0.296   &       0.348   &       0.087   &       0.083   \\
500     &       25      &       0.90    &       Bias    &       0.201   &       0.228   &       -0.039  &       -0.062  &       0.018   &       -0.028  \\
        &               &               &       RMSE    &       0.203   &       0.230   &       0.045   &       0.068   &       0.027   &       0.038   \\ \hline
\end{tabular}
\vspace{3mm}
\caption{\emph{Small sample performance of panel quantile methods in Design 5.} \label{mcIV}} 
\end{center}
\end{table}

\subsection{Case 2: Conditional Missing at Random Models}
We now compare the performance of the proposed estimator WPQR in models with conditional missing data at random. We continue to employ model \eqref{kgm1}-\eqref{kgm3} to generate simulation data and expand the variants to the model to the following cases:
\begin{description}
\item[Design 3] We assume that $\rho_0 = \rho_1 = 0$ and $\theta_1 = \theta_2 = 1$. The error term of the main equation is assumed to be $\chi_3^2$ and $\alpha_i \sim \mathcal{N}(0,1)$. 
\item[Design 4] The design is similar to Design 3 but now we consider a missing data process which depends on the observed lagged value of the response variable. Then $\rho_0 = 0$, $\rho_1 = 0.5$ and $\theta_1 = \theta_2 = 0$. The error term $u_{it} \sim \chi_2^3$ and $\alpha_i \sim \mathcal{N}(0,1)$. 
\item[Design 5] We consider a simple variation of Design 4 by assuming $(u_{it},\alpha_i) \sim \mathcal{N}(\bm{0},\bm{I})$ and setting the intercept of \eqref{kgm2} equal to 5 in order to maintain the proportion of missing data.
\end{description}

We employ several sample sizes $ N =\{ 200, 500 \}$ and $T = \{ 5, 25 \}$ and compare the performance of the following estimators: (1) the pooled quantile regression estimator (QR); (2) a weighted version of the quantile regression estimator (WQR) as in Lipsitz et al. (1997) and Maitra and Vahid (2006); (3) Koenker's (2004) quantile regression estimator for a model with fixed effects (FE); (4) a weighted version of the quantile regression estimator for a model with fixed effects (WFE); (5) A penalized quantile regression estimator penalized estimator with $\lambda=1$ (PQR); (6) the penalized quantile regression estimator proposed in this study which uses propensity score weighting and $\lambda = 1$ (WPQR). We use the same weights for all the estimators. The weights $\hat{\pi}_{it}^{-1}$ are obtained after we estimate a logit model using the observed covariates and/or lagged independent variables as regressors.  

The results are presented in Tables \ref{mcI}, \ref{mcII}, and \ref{mcIV}. While Tables \ref{mcI} and \ref{mcII} show the bias and RMSE of the estimators in Designs 3 and 4 (when the conditional quantile function is non-linear), Table \ref{mcIV} shows the small sample performance of the estimators in Design 5 (when the conditional quantile function is linear under a missing data process that it is ignorable (MAR)). 

The tables consistently show that QR and WQR are biased in the presence of non-random missing data. The performance of the fixed effects estimators (FE and WFE) is satisfactory at the center of the conditional distribution of the response variable, but it deteriorates quickly as we move to the tails. In Table \ref{mcI} for instance, in a model with $N=200$ and $T=5$, the bias of the WFE estimator is -0.072 at the 0.5 quantile and -0.729 at the 0.9 quantile. In contrast, the bias of the WPQR estimator is relatively small and varies between 2\% and 4.5\%. We note that the WPQR estimator is not unbiased in the presence of individual effects that are correlated with the independent variables. The parameter $\pi=0.3$ in equation \eqref{kgm3}, and therefore, small biases are expected. The advantage of the simulation designs is that they allow us to see directly the advantages of shrinkage and the importance of $\lambda$ selection in models with endogenous covariates.

The relatively poor performance of the estimator proposed by Lipsitz et al. (1997) and Maitra and Vahid (2006) is not surprising since we consider the case of endogenous independent variables (i.e., $x_{it}$ and $\alpha_i$ are not independent). Moreover, the relatively poor performance of the fixed effects estimator is due to incidental parameters. Note that the bias decreases when $T$ increases, but it remains, in some cases, larger than 20\% when $T=25$ (i.e., Table \ref{mcII}). When the propensity score depends on the observed response variables, we continue to see that WPQR offers the best small sample performance in the class of panel quantile estimators. In the next section, we investigate the case of selection on unobservables.

\subsection{Case 3: Selection on unobservables}
This section expands the variants of the model by (i) considering a model where attrition depends on a variable that is not observed when the unit drops out of the sample, and (ii) considering a similar model to the model used in Kyriazidou (1997). The response variable is generated as in equations \eqref{kgm1} and \eqref{kgm2} but individual effects are generated by $\alpha_i = T_i^{-1} \sum_{t=1}^{T_i} x_{it} \pi_\alpha + \xi_{1,i}$ and $\eta_i = T_i^{-1} \sum_{t=1}^{T_i} w_{it} \pi_\eta + 2 \xi_{2,i}$, where $\eta_i = \alpha_i$ in equation \eqref{kgm2}, $\xi_{2,i}$ is distributed as uniform, and $(x_{it}, w_{it}) \sim \mathcal{N}(\bm{1},\bm{I})$. The distribution of the error term in equation \eqref{kgm1} is assumed to be Gaussian. The parameters $\beta_0 = 0$, $\beta_1=1$, $\pi_\alpha = \pi_\eta = 1$. The parameter of interest is $\beta_1$. We consider the following design: 
\begin{description}
\item[Design 6] We focus on a case where we have one possible ``refreshment" sample and $T=2$. We simulate mixed-continuous data between $T=1$ and $T=2$ and consider the sample closest to $T=2$ to satisfy Assumption \ref{A0}. We concentrate in the case of selection on unobservables by setting $\rho_0=0.5$ and $\rho_1=0$. 
\end{description}

Given the good performance of the proposed estimator in Tables \ref{mcI}-\ref{mcIV}, we concentrate our attention on the performance of the proposed estimator WPQR in the case of selection on unobservables. We expand the designs by considering several sample sizes $N = \{ 500, 2000 \}$ and $T = 2$ and compare the performance of the following estimators for the first stage: (1) the unfeasible estimator (UNF); (2) an estimator for the missing completely at random assumption (MCAR); (3) an estimator for the missing at random assumption (MAR); (4) an estimator for the assumption on selection on unobservables with a model estimated using refreshment samples (REF). 

Table \ref{mcV} shows two interesting findings. First, it is possible to improve the performance of the WPQR estimator in models with non-ignorable attrition by using a ``refreshment'' sample. Second, even for small $T$, the bias of the estimator is less than 10\% and the parameter is precisely estimated in comparison to the unfeasible estimator.  

\begin{table}
\begin{center}
\begin{tabular}{ c c c c c c c c c c c } \hline
\multicolumn{1}{l}{}&\multicolumn{2}{c}{}& \multicolumn{8}{c}{WPQR: First Stage Methods}  \\
\multicolumn{1}{l}{N}&\multicolumn{1}{c}{T}& \multicolumn{1}{c}{$\tau$}& \multicolumn{4}{c}{Bias}& \multicolumn{4}{c}{RMSE}  \\
\multicolumn{3}{l}{}&\multicolumn{1}{c}{UNF}&\multicolumn{1}{c}{MAR}&\multicolumn{1}{c}{MCAR}&\multicolumn{1}{c}{REF}&\multicolumn{1}{c}{UNF}&\multicolumn{1}{c}{MAR}&\multicolumn{1}{c}{MCAR}&\multicolumn{1}{c}{REF}\\  \hline
500     &       2               &       0.10    &       -0.015  &       -0.164  &       -0.107  &       -0.086  &       0.115   &       0.175   &       0.136   &       0.127   \\
500     &       2               &       0.25    &       -0.010  &       -0.186  &       -0.102  &       -0.081  &       0.088   &       0.194   &       0.123   &       0.108   \\
500     &       2               &       0.50    &       -0.003  &       -0.211  &       -0.089  &       -0.068  &       0.069   &       0.222   &       0.112   &       0.099   \\
500     &       2               &       0.75    &       0.005   &       -0.240  &       -0.079  &       -0.056  &       0.064   &       0.259   &       0.109   &       0.091   \\
500     &       2               &       0.90    &       0.011   &       -0.245  &       -0.073  &       -0.049  &       0.072   &       0.273   &       0.115   &       0.096   \\ \hline
2000    &       2       &       0.10    &       -0.005  &       -0.162  &       -0.094  &       -0.074  &       0.059   &       0.165   &       0.104   &       0.090   \\
2000    &       2       & 0.25  &       -0.001  &       -0.184  &       -0.091  &       -0.070  &       0.046   &       0.187   &       0.099   &       0.082   \\
2000    &       2       &       0.50    &       0.002   &       -0.214  &       -0.089  &       -0.068  &       0.040   &       0.217   &       0.097   &       0.078   \\
2000    &       2       &       0.75    &       0.001   &       -0.246  &       -0.084  &       -0.062  &       0.037   &       0.252   &       0.092   &       0.073   \\
2000    &       2       &       0.90    &       0.002   &       -0.271  &       -0.082  &       -0.060  &       0.037   &       0.283   &       0.094   &       0.074   \\ \hline
\end{tabular}
\vspace{3mm}
\caption{\emph{Small sample performance of the WPQR estimator. The columns describe the performance of the different first stage estimators} \label{mcV}} 
\end{center}
\end{table}

\section{An Empirical Application}\label{EmpApp}

A number of recent papers investigate the extent to which technology that enables communication between utility companies and consumers leads to higher electricity savings (Joskow 2012\nocite{joskow:12}, Harding and Lamarche 2016\nocite{hl16}, Harding and Sexton 2017\nocite{hs17}). These key developments originated by the development of a wide-spread introduction of time-of-use (TOU) pricing in the electricity sector and the increased availability of Big Data which enable consumers to use increasingly sophisticated devices to monitor and optimize their electricity usage.

The introduction of ``smart'' technologies and time-of-use pricing leads to new findings regarding the use of technologies that maximize consumers’ ability to respond to information on prices and quantity. The studies are typically based on a small number of households observed at high frequency over time. For instance, Harding and Lamarche (2016)\nocite{hl16} draw conclusions from a large scale randomized controlled trial of TOU pricing for residential consumption in a South Central US State. In their study, the electricity usage of 1011 households were recorded over 15-minute intervals, leading to a panel data set of more than 11 million observations. 

Despite the increasing popularity of empirical studies in this area (Jessoe and Rapson (2014)\nocite{katrina2014}, Ito (2014)\nocite{Ito2014}, Wolak (2011)\nocite{wolak:06}, among others), attrition has been ignored in the empirical literature. Households move, drop out of the sample for unknown reasons and/or can request changes in the technology randomly assigned to them due to incompatibility to different settings. Although some experiments have a high degree of compliance among treated participants, it is possible that a number of participants are switched to alternative treatments because issues with the installed technology. Naturally, missing data can create estimation issues associated with the use of non-random samples over time, even though the data is likely to be obtained from a reliable allocation of households to treatment groups and control groups at the beginning of the randomized trial period. To investigate attrition due to latent variables in this setting, we generate a simulation experiment using data from electricity consumption in Ireland. We generate different levels of attrition and investigate the performance of existing methods and the proposed approach. 

\begin{table}
\begin{center}
\begin{tabular}{ l c c c c  } \hline
\multicolumn{1}{c}{Variables}&\multicolumn{2}{c}{Control}& \multicolumn{2}{c}{Treatment} \\ 
\multicolumn{1}{c}{}&\multicolumn{1}{c}{Mean}&\multicolumn{1}{c}{Std Dev}&\multicolumn{1}{c}{Mean}&\multicolumn{1}{c}{Std Dev} \\ \hline
Electricity usage at 6 AM       (Night)&        0.101   &       0.139   &       0.109   &       0.164   \\
Electricity usage at 6 PM       (Peak) &        0.415   &       0.463   &       0.409   &       0.470   \\
Electricity usage at 8 PM       (Day)  &        0.387   &       0.408   &       0.406   &       0.433   \\
Household size  &       0.251   &       0.434   &       0.340   &       0.474   \\
One or more adults at home      &       0.647   &       0.478   &       0.690   &       0.462   \\
One or more kids at home        &       0.147   &       0.354   &       0.165   &       0.371   \\
Electric heater &       0.060   &       0.237   &       0.070   &       0.255   \\
Electric cook   &       0.721   &       0.448   &       0.670   &       0.470   \\
Head of household employed      &       0.538   &       0.499   &       0.620   &       0.485   \\
House size      &       0.426   &       0.494   &       0.460   &       0.498   \\
Insulated attic &       0.885   &       0.319   &       0.895   &       0.307   \\
Insulated walls &       0.555   &       0.497   &       0.600   &       0.490   \\
Age of the house $<$ 10 years   &       0.181   &       0.385   &       0.170   &       0.376   \\
Age of the house 10 to 30 years &       0.221   &       0.415   &       0.310   &       0.462   \\
Temperature     &       6.537   &       5.386   &       6.537   &       5.386   \\
Relative humidity       &       93.317  &       4.349   &       93.317  &       4.349   \\ \hline
Number of Households & 470 & & 200 & \\
``Population" at hour $h$       &       170,610 &               &       72,600  &               \\
``Population"   &       8,189,280       &               &       3,484,800       &               \\ \hline
\end{tabular}
\vspace{3mm}
\caption{\emph{Descriptive statistics.} \label{table41}} 
\end{center}
\end{table}

\subsection{Data}

The data employed in this paper is obtained from a large scale randomized control trial as part of Irelands' smart metering plan for residential electricity consumption. The smart meter data is obtained from the Irish Social Science Data Archive (ISSDA) and we use the CER Smart Metering Project. The data set used in this paper consists of household smart meter readings measured over 30 minute intervals for $N=670$ households. The large scale experiment was conducted from 2008 to 2011 and we employ data from the period January 2010 to December 2010. The period June 2009 to December 2009 is the period before the implementation of the policy. In this period, baseline data was collected and the participants were assigned into treatment and control groups. In the second period, from January 2nd, 2010 to December 31st, 2010, the electricity usage of households in the control group was recorded as well as the electricity consumption of the households in the treatment groups. 

The participants of the program were selected to ensure an adequate representation of the national population. They were assigned to two treatment types. First, treated customers were charged at different rates during weekdays: Tariff A is 12 cents per kilowatt hours (kwh) from 23:00 to 8:00 (Night), 14 cents per kwh from 8:00 to 23:00 (Day) with the exception of 17:00 to 19:00, and 20 cents per kwh from 17:00 to 19:00 (Peak); Tariff B is 11 cents per kwh, 13.5 cents per kwh and 26 cents per kwh; Tariff C is 10 cents per kwh, 13 cents per kwh and 32 cents per kwh; and Tariff D is 9 cents per kwh, 12.5 cents per kwh and 38 cents per kwh, respectively. The rates are in Euro cents and they exclude a consumption tax (value added tax). In this study, we concentrate our attention on Tariff B. The control group has a time invariant rate of 14.1 cents per kwh.

The second treatment relates to the enabling technology. There are three treatment groups: Monthly bill combined with an energy usage statement (T1); Bimonthly billing combined with an energy usage statement plus overall load reduction (T2); in-home display (IHD) device as well as a Bimonthly billing combined with an energy usage statement (T3). An IHD is a small wireless device which displays information on electricity usage and costs in real time. We proceed in this study creating a treatment group for households in these groups. The control group receives bimonthly electricity bills. 

The dependent variable is electricity consumption, measured in kilowatt hours, at the residential level (Table \ref{table41}). The control group includes 470 households and the other 200 households were assigned to the different treatments. The data includes information on the average temperature in Ireland, average relative humidity, an indicator for household size (4 or more people in the home), an indicator for electricity used to heat home (either central or plug in), an indicator for electric stove for cooking, an indicator variable for whether the head of the household is employed, an indicator for the size of the house (e.g., 3 rooms or more rooms), and indicators for the characteristics of the house. 

\subsection{Model}
Because TOU pricing vary by hour, we estimate the treatment effect at hour $h$ of the day corresponding to the three different tariffs: Night, Peak and Day. This approach is consistent with existing models of electricity consumption, most notably Ramanathan, Engle, Granger, Vahid-Araghi, and Brace (1997). To model electricity consumption, we follow the model first suggested by Ramanathan et al. (1997)\nocite{ramanathan:97} and model the load function as piecewise constant over the interval of time, $h$, for which electricity consumption is measured. This gives rise to the following equation:
\begin{equation}
\log (Y_{i,t,h})=\beta_{0,h} + \delta_h d_{i,h} + \bm{x}_{i}' \bm{\beta}_{1,h} + f(W_{t,h}) + \alpha_{i,h} + \epsilon_{i,t,h}, \label{mainE}
\end{equation} 
where $i= 1, \hdots, N$ denote households, $t = 1, \hdots, T$ denote days, and $f(W_{t,h})$ corresponds to a smooth function of weather measurement that can be generated as univariate splines of temperature and relative humidity. Our quantile treatment coefficients are identified by comparing electricity usage in the control group to that in the treated group:
\begin{equation}
Q_{\log (Y_{i,t,h})}(\tau | d_{i,h}, \bm{x}_{i}, W_{t,h}) = \beta_{0,h}(\tau) + \delta_{h}(\tau) d_{i,h} + \bm{x}_{i}' \bm{\beta}_{1,h}(\tau) + f(W_{t,h};\tau) + \alpha_{i,h}(\tau), \label{mainQ}
\end{equation}
where $\delta_{h}(\tau)$ is the quantile treatment effect (QTE) of interest. For the purpose of the simulation experiment performed in this section, we are interested in investigating how the households that were using an in-home display (IHD) device plus a Bimonthly billing combined with an energy usage statement compare with the control group when pricing have consirable changes over time. 

\begin{table}
\begin{center}
\begin{tabular}{ l c c c c c c } \hline
Variable & \multicolumn{5}{c}{Quantile Regression} & Mean \\
                     &  0.10 & 0.25 & 0.50 & 0.75 & 0.90        \\ \hline
& \multicolumn{6}{c}{12 months period} \\                               
Treatment at 6 AM & -0.021 & -0.028 & -0.010 & 0.001 & 0.008 & -0.001 \\
\hspace{1mm}    (11 cents per kwh)  &(0.007)&(0.005)&(0.004)&(0.004)&(0.007)&(0.004) \\
Treatment at 6 PM & -0.071 & -0.075 & -0.107 & -0.100 & -0.043 & -0.067 \\
\hspace{1mm}    (26 cents per kwh)  &(0.009)&(0.007)&(0.006)&(0.006)&(0.007)&(0.005) \\
Treatment at 8 PM & -0.080 & -0.056 & -0.064 & -0.025 & 0.021 & -0.031 \\
\hspace{1mm}    (13.5 cents per kwh)  &(0.009)&(0.005)&(0.005)&(0.006)&(0.007)&(0.005) \\ \hline
Controls & Yes & Yes & Yes & Yes & Yes & Yes \\
Observations per household & 363 & 363 & 363 & 363 & 363 & 363 \\
Size of the ``population" & 243,210 & 243,210 & 243,210 & 243,210 & 243,210 & 243,210 \\ \hline
& \multicolumn{6}{c}{First two months} \\                                                                                                               
Treatment at 6 AM&0.021&-0.044&-0.044&-0.055&-0.041&-0.032\\
\hspace{1mm}    (11 cents per kwh) &(0.015)&(0.012)&(0.009)&(0.012)&(0.015)&(0.010) \\
Treatment at 6 PM&-0.078&-0.130&-0.143&-0.101&-0.008&-0.085\\
\hspace{1mm}    (26 cents per kwh) &(0.025)&(0.016)&(0.013)&(0.015)&(0.014)&(0.012) \\
Treatment at 8 PM&-0.090&-0.065&-0.085&-0.029&0.017&-0.052\\
\hspace{1mm}    (13.5 cents per kwh) &(0.019)&(0.012)&(0.010)&(0.012)&(0.016)&(0.010) \\ \hline
Controls&Yes&Yes&Yes&Yes&Yes&Yes\\
Observations per household&59&59&59&59&59&59\\
Size of the ``population" &39,530&39,530&39,530&39,530&39,530&39,530\\  \hline
\end{tabular}
\vspace{3mm}
\caption{\emph{Population regressions by OLS and Quantile Regression. Standard errors are in parentheses.} \label{table42}} 
\end{center}
\end{table}

Table \ref{table42} presents results obtained from estimating equation \eqref{mainQ} separately for each hour $h$ of the day. The upper block of the table shows the quantile treatment effect at different TOU pricing for the period January to December 2010. The lower block of the table shows evidence on the short term effects of the pricing policy, as we restrict the sample to include observations over the first two months after the implementation of the policy. The results show that, as expected, both the introduction of ``smart'' technology and time-of-use pricing lead to a reduction of electricity consumption with the largest gain in the first months after the implementation of the program. At the mean level, we find a reduction of 8.1\% at the peak hour in the first two months, 5.1\% at 8 PM and 3.1\% at 6 AM. At peak hours, there appear to be considerably heterogeneity across  quantiles. While the treatment effect is -12.2\% at the 0.25 quantile of the conditional distribution of electricity consumption, it is not statistically significantly different than zero at the 0.9 quantile of the conditional distribution. Another interesting, yet expected finding is that, in general, the QTE estimates at the peak hour are smaller than the QTE estimates during day hours, when the price is reduced by 50\%. (The sole exception is $\tau=0.10$).

\subsection{Simulation experiment}
The simulation experiment is based on the sample of households described in Table \ref{table41}. These households are considered to be the ``population". Following closely Bhattacharya (2008), the simulation exercise is performed as follows. First, we estimate equation \eqref{mainQ} and we treat these estimates as population parameters. The regression results are presented in the lower block of Table \ref{table42}. We did not include weather variables for simplicity but we include the 11 covariates shown in Table \ref{table41}. From this population, we generate an artificial electricity usage variable, $\log (Y_{i,t,h}) = \hat{\beta}_{0,h}(\tau) + \hat{\delta}_{h}(\tau) d_{i,h} + \bm{x}_{i,h}' \hat{\bm{\beta}}_{1,h}(\tau) + \alpha_{i,h} + u_{i,t,h}$, where $\alpha_{i,h} = d_{i,h} \xi_{1,i,h} + \sqrt{0.5} \xi_{2,i,h}$. The variables $(\xi_{1,i,h},\xi_{2,i,h},u_{i,t,h})$ are distributed as independent standard normal. 

Second, we generate attrition for this population considering the following equation for the missing data process:
\begin{equation}
s_{i,t,h} = 1 \{ \rho_0 \log (Y_{i,t,h}) + \rho_1 \log (Y_{i,t-1,h}) - v_{i,t,h} > 0 \}, \label{ee:simatt}
\end{equation}
where $v_{i,t,h}$ is a normally distributed random variable with mean 5 and variance 1, which gives the case of no attrition when $\rho_0=\rho_1=0$. To consider a model with selection on unobservables, we assume $\rho_0 \in \{0.5,1\}$. We also consider the case of no attrition, say $\rho_0=0$, and compare the performance of the methods. The parameter $\rho_1$ controls the degree of ``ignorable" selection and it is set to zero to concentrate on selection on unobservables. Notice that the $\alpha_{i,h}$'s can be a source of attrition, although they are centered at zero and are correlated with the treatment variable. 

Third, we estimate equation \eqref{mainQ} for the artificial electricity usage variable considering estimators for models with attrition: WQR as in Lipsitz et al. (1997) and Maitra and Vahid (2006) and our proposed estimator WPQR. For comparison on the performance of the estimator of the first stage, we include the unfeasible version which uses $\pi_{0,i,t}(h) = P(s_{i,t,h} = 1 | \log(y_{i,t,h}))$ and the feasible version which estimates the propensity score using an i.i.d. sample for the missing data and employs parametric methods. The ``refreshment" sample is obtained from the ``population" to satisfy the condition $P(s_{i,t,h} = 1 | \log(y_{i,t,h})) = P(s_{i,t,h} = 1 | \log(y_{i,t,h'}))$ a.s., where $h' \in [h - \epsilon,h+\epsilon]$ is defined as the closest 30-minute interval to $h$ within the same TOU tariff bracket. We repeat the exercise 400 times and we obtain the average estimate of the QTE and root mean square error (RMSE) of existing approaches and the proposed methods. 

\subsection{Streaming sample}

We used ``refreshment" sample in quotations to indicate that while the idea is to use additional data as in additive-non-ignorable models (i.e., Hirano et al. 2001), we do not have a fresh new sample of subjects in the second period. In the current setup it is typical for the sensors to receive continuous recordings of electricity consumption. While the devices record this data (and it is often stored, though sometimes also discarded) not all the data is used for analytics. In our case  these additional data are not used for identification and estimation of the quantile treatment effects. To avoid confusion with ``refreshment" samples, we refer to ``streaming" measures or ``streaming" sample. This is a term borrowed from the computer science and engineering literature on the use of Big Data methods and denotes the fact that it is common for the sensors recording the data to receive ``streaming data" (potentially at a very fast rate) but that in practice only a small subsample of all the available data stream is used to conduct the analysis. If needed and given that it was previously stored, systems are in place to retried the additional data. While this type of data may not be that familiar to economists, it is quite common in practice (see, e.g., Babcok et al. 2002\nocite{babcock02}, Hofleitner et al. 2012\nocite{Hofleitner12}, and Moreira-Matias et al. 2013\nocite{moreira2013}).

\begin{figure}
\begin{center}
\centerline{\includegraphics[width=.8\textwidth]{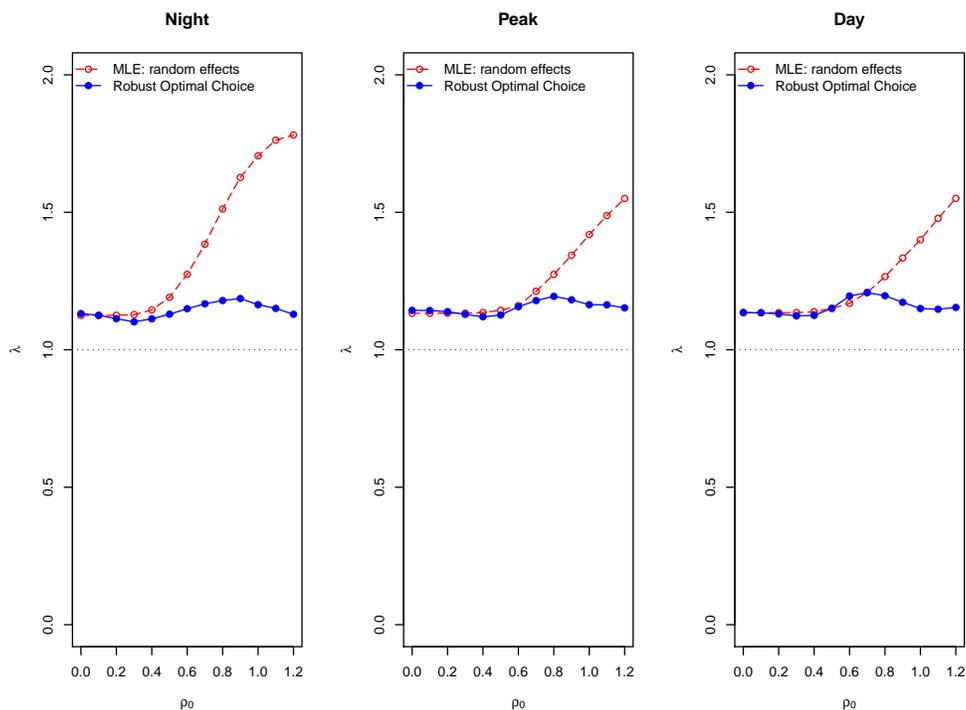}}
\caption{\emph{Tuning parameter selection in a model with attrition.}  \label{ee-fig2}}
\end{center}
\end{figure}

\subsection{The choice of $\lambda$}
We propose to select $\lambda$ in this empirical application by adapting a procedure first proposed by Belloni and Chernozhukov (2011)\nocite{belloni2011} to the case of missing observations. We introduce the following random variable:
\begin{equation*}
\Lambda_h = T \max_{1 \leq j \leq p+N} \left| \frac{1}{NT} \sum_{t=1}^T \sum_{i=1}^N \frac{s_{i,t}}{ \hat{\pi}_{i,t} } \frac{X_{j,i,t,h}}{\hat{\sigma}_{j,h}} \frac{ [\tau - I( \utilde{u}_{i,t,h} \leq \tau)] }{ \sqrt{ \tau (1 - \tau)}} \right| 
\end{equation*}
where the random variables $\utilde{u}_{1,1,h},\utilde{u}_{1,2,h},\hdots,\utilde{u}_{N,T,h}$ are i.i.d. $\mathcal{U}(0,1)$ independent of $X_{j,i,t,h}$ and $\sigma_{j,h}$ is the standard deviation of the variable $X_{j,i,t,h}$, which is the $j$-th covariate of the vector $\bm{X}_{i,t,h} = (d_{i,h},\bm{x}_i',\hat{f}(W_{t,h};\tau),\bm{z}_i')'$. Then, we set $\lambda_h = \varkappa \cdot \Lambda_h(1 - c)$, where $\varkappa = \kappa \sqrt{ \tau (1 - \tau)}$, the constant $\kappa > 0$, and $\Lambda_h(1 - c)$ is the $(1-c)$-quantile of the random variable $\Lambda_h$ conditional on $\bm{X}_{i,t,h}$. Following practical recommendations, we set $\kappa=2$ and $c=0.1$. The procedure is robust to different levels of attrition since the correction for missing data can be easily accommodated  by considering $\hat{\pi}_{i,t}$. Therefore, we label the procedure robust optimal choice of $\lambda$.

The robust approach was contrasted with two existing $\lambda$ selection methods. While Chen, Wan and Zhou (2015)\nocite{xChen2015} proposed a $K$-fold cross-validation procedure for cross-sectional regression with missing observations, Koenker (2005) proposes to estimate $\lambda$ by $\hat{\lambda} = \hat{\sigma}_u / \hat{\sigma}_\alpha$, where $\sigma_u^2$ is the variance of the error term and $\sigma_\alpha^2$ is the variance of the individual effect. The estimation of $\lambda$ works well in Gaussian models under non-missing data and it can be accomplished by employing standard maximum likelihood methods for random effects models. In a problem with missing data, the estimator $\hat{\lambda}$ is similarly defined but it is obtained by estimating the variance of $u$ and $\alpha$ using observed data. 

Figure \ref{ee-fig2} shows the average value of the selected $\lambda$ parameter over 200 random samples. The figure shows two estimated values for $\lambda$, the MLE estimator $\hat{\lambda}$ above and the selected $\lambda$ parameter using the robust procedure proposed in this paper. The value of $\rho_0$ generates different degrees of attrition in our application, ranging from 75 percent during night hours to over 20 percent during peak and day hours. We find that the performance of the estimator for $\lambda$ is quite satisfactory, in particular for models with a large degree of attrition. The value of $\lambda$ is does not vary with different levels of attrition, in contrast with the estimates obtained from MLE random effects estimator. The $K$-fold cross-validation approach did not outperform our preferred robust approach and is not illustrated here.     

\begin{table}
\begin{center}
\begin{tabular}{ c c c c c c c c c c c c } \hline
  &             &  & & &             & \multicolumn{3}{c}{Treatment effect} & \multicolumn{3}{c}{RMSE}\\      
                
$\tau$  & $\rho_0$ &    $NT$  &       Attrition  & $\delta_h$ & $\lambda$  & WQR   &       \multicolumn{2}{c}{WPQR} &       WQR     &       \multicolumn{2}{c}{WPQR}\\      
        &          &        &     &                  &        &       &                UNF    &       STR         & &              UNF    &       STR  \\ \hline
                     &   &       & \multicolumn{9}{c}{Night (11 cents per kwh)} \\           \hline
0.1     &       0.0     &       39530   &       0.000   &       0.021   &       0.581   &       -0.428  &       0.025   &       0.025   &       0.465   &       0.131   &       0.131   \\
0.1     &       0.5     &       34551   &       0.126   &       0.021   &       0.622   &       -0.181  &       0.027   &       0.029   &       0.226   &       0.126   &       0.125   \\
0.1     &       1.0     &       5888    &       0.851   &       0.021   &       0.776   &       0.530   &       0.053   &       0.050   &       0.524   &       0.116   &       0.116   \\
0.5     &       0.0     &       39530   &       0.000   &       -0.044  &       1.127   &       -0.044  &       -0.044  &       -0.044  &       0.102   &       0.130   &       0.130   \\
0.5     &       0.5     &       38202   &       0.034   &       -0.044  &       1.127   &       0.018   &       -0.043  &       -0.043  &       0.119   &       0.128   &       0.128   \\
0.5     &       1.0     &       15336   &       0.612   &       -0.044  &       1.161   &       0.453   &       -0.006  &       -0.008  &       0.506   &       0.120   &       0.117   \\
0.9     &       0.0     &       39530   &       0.000   &       -0.041  &       0.940   &       0.406   &       -0.039  &       -0.039  &       0.463   &       0.127   &       0.127   \\
0.9     &       0.5     &       39200   &       0.008   &       -0.041  &       0.817   &       0.416   &       -0.038  &       -0.038  &       0.473   &       0.127   &       0.127   \\
0.9     &       1.0     &       28081   &       0.290   &       -0.041  &       0.805   &       0.580   &       0.022   &       0.023   &       0.633   &       0.134   &       0.135   \\ \hline
                  &      &       & \multicolumn{9}{c}{Peak (26 cents per kwh)} \\         \hline
0.10    &       0.00    &       39530   &       0.000   &       -0.078  &       0.611   &       -0.526  &       -0.074  &       -0.074  &       0.465   &       0.129   &       0.129   \\
0.10    &       0.50    &       38144   &       0.035   &       -0.078  &       0.591   &       -0.413  &       -0.072  &       -0.071  &       0.355   &       0.130   &       0.130   \\
0.10    &       1.00    &       17569   &       0.556   &       -0.078  &       0.711   &       0.174   &       -0.023  &       -0.022  &       0.265   &       0.120   &       0.118   \\
0.50    &       0.00    &       39530   &       0.000   &       -0.143  &       1.144   &       -0.143  &       -0.141  &       -0.141  &       0.103   &       0.130   &       0.130   \\
0.50    &       0.50    &       39368   &       0.004   &       -0.143  &       1.126   &       -0.131  &       -0.141  &       -0.141  &       0.103   &       0.130   &       0.130   \\
0.50    &       1.00    &       33054   &       0.164   &       -0.143  &       1.125   &       0.114   &       -0.133  &       -0.132  &       0.272   &       0.124   &       0.123   \\
0.90    &       0.00    &       39530   &       0.000   &       -0.008  &       0.942   &       0.439   &       -0.008  &       -0.007  &       0.463   &       0.128   &       0.128   \\
0.90    &       0.50    &       39508   &       0.001   &       -0.008  &       0.888   &       0.440   &       -0.007  &       -0.007  &       0.464   &       0.127   &       0.127   \\
0.90    &       1.00    &       38416   &       0.028   &       -0.008  &       0.804   &       0.479   &       -0.002  &       -0.001  &       0.502   &       0.125   &       0.125   \\  \hline
                   &     &       & \multicolumn{9}{c}{Day (13.5 cents per kwh)} \\           \hline
0.10    &       0.00    &       39530   &       0.000   &       -0.090  &       0.645   &       -0.539  &       -0.088  &       -0.088  &       0.465   &       0.129   &       0.129   \\
0.10    &       0.50    &       38626   &       0.023   &       -0.090  &       0.590   &       -0.450  &       -0.086  &       -0.085  &       0.378   &       0.130   &       0.129   \\
0.10    &       1.00    &       20975   &       0.469   &       -0.090  &       0.641   &       0.105   &       -0.039  &       -0.039  &       0.211   &       0.121   &       0.117   \\
0.50    &       0.00    &       39530   &       0.000   &       -0.085  &       1.148   &       -0.084  &       -0.084  &       -0.084  &       0.103   &       0.130   &       0.130   \\
0.50    &       0.50    &       39374   &       0.004   &       -0.085  &       1.125   &       -0.073  &       -0.084  &       -0.084  &       0.103   &       0.130   &       0.130   \\
0.50    &       1.00    &       33017   &       0.165   &       -0.085  &       1.120   &       0.162   &       -0.079  &       -0.078  &       0.262   &       0.123   &       0.123   \\
0.90    &       0.00    &       39530   &       0.000   &       0.017   &       0.939   &       0.464   &       0.018   &       0.018   &       0.463   &       0.127   &       0.127   \\
0.90    &       0.50    &       39504   &       0.001   &       0.017   &       0.876   &       0.465   &       0.018   &       0.018   &       0.464   &       0.127   &       0.127   \\
0.90    &       1.00    &       38138   &       0.035   &       0.017   &       0.832   &       0.510   &       0.028   &       0.028   &       0.507   &       0.127   &       0.126   \\ \hline
\end{tabular}
\vspace{3mm}
\caption{\emph{Performance of Weighted Quantile Regression Estimators Under Selection on Unobservables. The unfeasible panel data estimator is denoted by UNF and the feasible version using a ``streaming" sample is denoted by STR.} \label{table43}} 
\end{center}
\end{table}

\begin{figure}
\begin{center}
\centerline{\includegraphics[width=.8\textwidth]{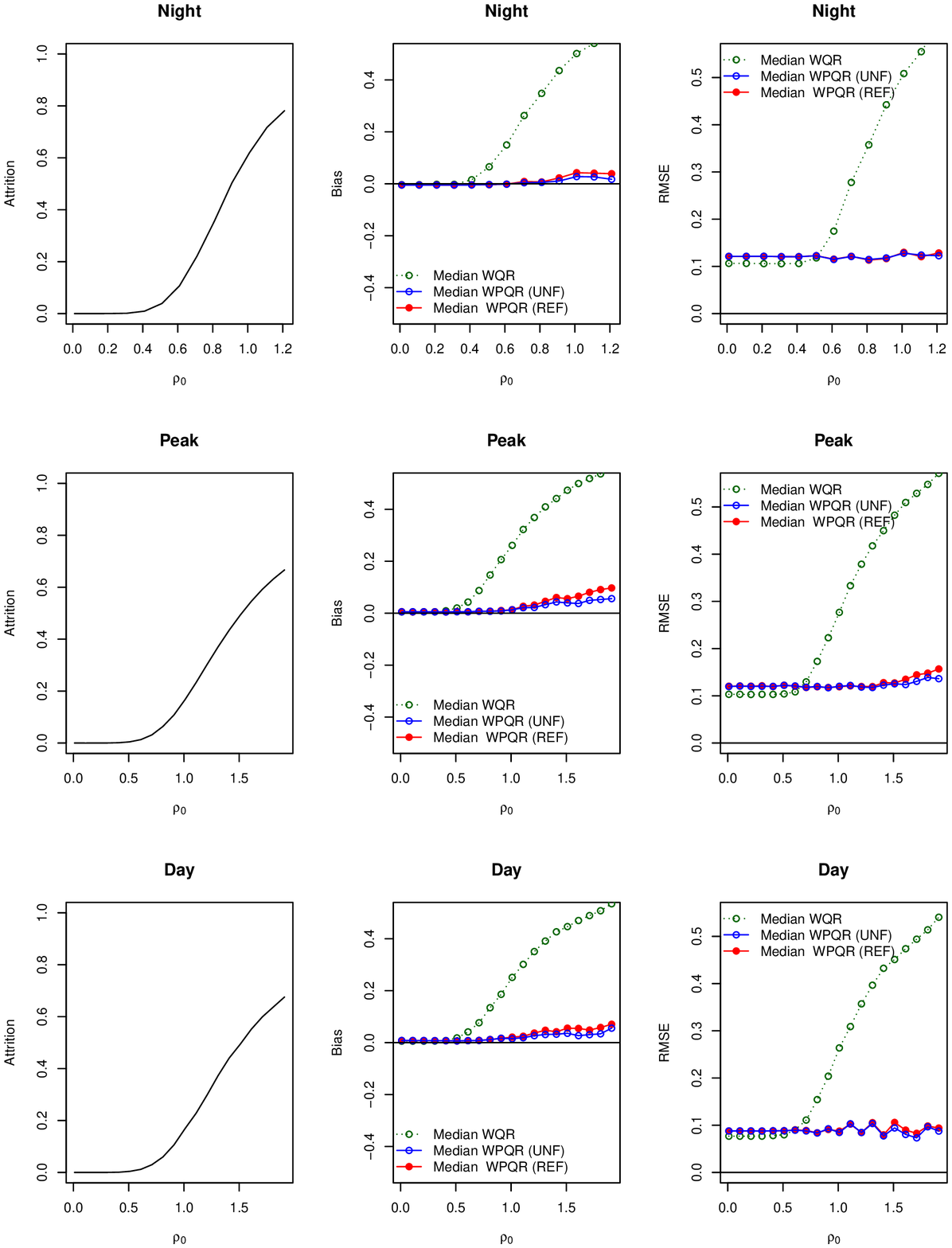}}
\caption{\emph{ Attrition and its impact on the estimation of TOU pricing.}  \label{ee-fig1}}
\end{center}
\end{figure}

\subsection{Empirical Results}

Table \ref{table43} reports results for the QTE, $\delta_h(\tau)$, at $\tau \in \{0.1,0.5,0.9\}$ and $\rho_0 \in \{0,0.5,1\}$. It also shows the proportion of missing observations, the value of the parameter of interest $\delta_h$, and the value of the selected tuning parameter, $\lambda$. We expect the results to deteriorate as $\rho_0$ increases, in particular at the tails of the conditional distribution. We interpret the difference between WPQR and WQR as estimates with and without correction for selection on unobservables. The comparison of the QTE estimates obtained previously in Table \ref{table42} and the results shown in the first columns of Table \ref{table43} illustrate the advantage of the proposed approach. The corrected estimates perform better than the estimates not corrected for selection on unobservables in a panel quantile model. It is also interesting to see that the correction seem to perform well under different values of the treatment effect coefficient which varies by quantile and TOU tariff. 

Figure \ref{ee-fig1} presents additional results. For simplicity in exposition, we provide evidence on the performance of two estimators (WQR, and WPQR) at the median quantile $\tau=0.5$. The estimator labeled WPQR (UNF) is the unfeasible version of the estimator defined in equation \eqref{wpqr} using the weights obtained from knowing the parameters of equation \eqref{ee:simatt}. On the other hand, WPQR (STR) is the feasible version of the estimator. The graph presents attrition, bias and RMSE. As expected, the number of observations, $NT$, decreases when $\rho_0$ increases. 

The results show that the bias changes as $\rho_0$ increases, although the bias of the WQR estimator monotonically increases at a faster rate. Naturally, the RMSE of the estimator seems to increase on $\rho_0$ too. At the 0.5 quantile, the performance of the panel quantile methods is similar, exhibiting small biases when the proportion of non-random missing observations is between 0\% and 60\%. We do see that the proposed approach offers the best performance and tends to provide smaller biases and RMSE for larger values of $\rho_0$. It is interesting to see that the performance of the feasible version of the estimator compares quite well to the performance of the unfeasible estimator. 

\subsection{Missing Covariates}

In the evaluation of electricity pricing experiments using ``streaming" data, the assumption that covariates are observed is easily verifiable in the data. The model estimated in equation (4.2) include (i) baseline characteristics and (ii) ``external" covariates that are not household-specific. The vector of hour-of-day invariant variables $\bm{x}_i$ includes indicators for household size and the size of the house, an indicator for electricity used to heat home, an indicator for electric stove for cooking, an indicator variable for whether the head of the household is employed, indicators for the characteristics of the house, and indicators for the age of the house (i.e., age $\leq 10$ years and age between 10 and 30 years old). The external covariates are temperature and relative humidity in Ireland, and we employ cubic B-spline basis functions to estimate the function $f(W_{t,h})$. Lastly, the treatment indicator $d_{i,h}$ is observed because it is a determinist function of the time of the day.

In other applications, however, the assumption that treatment variables and controls are observed for all time periods seems unlikely. There are several approaches discussed in the literature for the estimation of conditional mean models (see Robins and Wang (2000)\nocite{robins2000}, Roy and Lin (2002)\nocite{roylin2002}, D'Agostino and Rubin (2000)\nocite{RubinDonald2000}, among others), in contrast to the quantile regression literature that remains largely undeveloped. The sole exceptions are Wei, Ma and Carroll (2012)\nocite{WeiMaCarroll2012} and Wei and Yang (2014)\nocite{WeiYang2014} but their approaches are designed to address missing covariates in cross-sectional data. More importantly, they assume that the response variable is observed for all subjects, which is likely to be violated in applications in panel data. A general approach for the case of missing treatments and covariates is out of the scope of this paper and it requires further investigation. D\'iaz (2017)\nocite{diaz2017} seems a good starting point but the approach is not developed for panel data.

\section{Conclusions}

Non-random attrition in randomized field trials, as originally pointed out by Hausman and Wise (1979), raises several issues in panel data. Only a few papers investigate this issue in quantile regression, but they require that unobserved individual heterogeneity to be independent of the independent variables and the methods only addresses issues associated with selection on observables. These assumptions are typically considered to be strong for the analysis of large randomized field trials. These studies include recent Time-of-Day electricity pricing experiments inspired by the work of Aigner and Hausman (1980). 

This paper introduces a quantile regression estimator for panel data models with individual heterogeneity and attrition. The method is motivated by the fact that attrition bias is often encountered in Big Data problems. Our paper however makes two distinct contributions to the existing literature. First, we propose a method to estimate a model with individual unobserved heterogeneity that can be a source of attrition. Second, our method exploits additional data obtained by the increased availability of Big Data of households' panels. The estimator is computationally easy to implement in Big Data applications with a large number of subjects.  We investigate the conditions under which the parameter estimator is asymptotically Gaussian and we carry out a series of simulations to investigate the finite sample properties of the estimator.

\bibliographystyle{econometrica}
\bibliography{qatt}

\appendix
\section{Proofs} 
Throughout this appendix, we omit $\tau$ in $\bm{\theta}(\tau)$ for notational simplicity and the proofs refer to Knight's (1998)\nocite{kK98} identity. If we denote the quantile influence function by $\psi_{\tau}(u) = \tau - I(u \leq 0)$, for $u \neq 0$, $\rho_{\tau}(u-v) - \rho_{\tau}(u) = - v \psi_{\tau} + \int_{0}^{v} (I(v \leq s) - I(v \leq 0)) ds$. 

\begin{lemma}\label{L1}
Let $S(\bm{\theta},\pi(\bm{\gamma})) = S(\bm{\theta},\bm{\gamma})$ be a convex function in $\bm{\theta}$. Assume that $\sup_{\bm{\theta}} | S(\bm{\theta},\hat{\bm{\gamma}}) - S(\bm{\theta},\bm{\gamma}_0) | = o_p(1)$. For any $\epsilon > 0$, let $S(\hat{\bm{\theta}}_1,\hat{\bm{\gamma}}) < \inf_{\| \bm{\theta} - \hat{\bm{\theta}}_1 \|} S(\bm{\theta},\hat{\bm{\gamma}})$ and $S(\hat{\bm{\theta}}_2,\bm{\gamma}_0) < \inf_{\| \bm{\theta} - \hat{\bm{\theta}}_2 \|} S(\bm{\theta}, \bm{\gamma}_0)$. Then, $\|  \hat{\bm{\theta}}_1 - \hat{\bm{\theta}}_2 \| = o_p(1)$.
\end{lemma}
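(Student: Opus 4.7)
The plan is to exploit the standard argmin-consistency structure: uniform proximity of two convex objectives forces their well-separated minimizers to be close. Convexity is the engine that converts closeness of the objectives into closeness of the argmins, since any local minimum of a convex function is a global minimum and minimizers are stable under small uniform perturbations.

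First, I would parse the two displays $S(\hat{\bm{\theta}}_1,\hat{\bm{\gamma}}) < \inf_{\|\bm{\theta}-\hat{\bm{\theta}}_1\|\geq\epsilon} S(\bm{\theta},\hat{\bm{\gamma}})$ and $S(\hat{\bm{\theta}}_2,\bm{\gamma}_0) < \inf_{\|\bm{\theta}-\hat{\bm{\theta}}_2\|\geq\epsilon} S(\bm{\theta},\bm{\gamma}_0)$ as well-separated-minimum conditions: for every $\epsilon>0$ there exists a positive gap
\[
\eta_\epsilon \;:=\; \inf_{\|\bm{\theta}-\hat{\bm{\theta}}_2\|\geq\epsilon} S(\bm{\theta},\bm{\gamma}_0) \;-\; S(\hat{\bm{\theta}}_2,\bm{\gamma}_0) \;>\; 0,
\]
and similarly for $\hat{\bm{\theta}}_1$ relative to $S(\cdot,\hat{\bm{\gamma}})$. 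Convexity guarantees that this formulation is consistent with $\hat{\bm{\theta}}_i$ being the (essentially unique) global minimizer of its corresponding objective.

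Next, I would chain three inequalities using the uniform approximation. Fix $\epsilon>0$. The uniform convergence hypothesis $\sup_{\bm{\theta}} |S(\bm{\theta},\hat{\bm{\gamma}}) - S(\bm{\theta},\bm{\gamma}_0)| = o_p(1)$ implies that, with probability approaching one, this supremum is less than $\eta_\epsilon/4$. Using that $\hat{\bm{\theta}}_1$ minimizes $S(\cdot,\hat{\bm{\gamma}})$ and therefore $S(\hat{\bm{\theta}}_1,\hat{\bm{\gamma}}) \leq S(\hat{\bm{\theta}}_2,\hat{\bm{\gamma}})$, a two-sided application of the uniform bound yields
\[
S(\hat{\bm{\theta}}_1,\bm{\gamma}_0) \;\leq\; S(\hat{\bm{\theta}}_1,\hat{\bm{\gamma}}) + \tfrac{\eta_\epsilon}{4} \;\leq\; S(\hat{\bm{\theta}}_2,\hat{\bm{\gamma}}) + \tfrac{\eta_\epsilon}{4} \;\leq\; S(\hat{\bm{\theta}}_2,\bm{\gamma}_0) + \tfrac{\eta_\epsilon}{2}.
\]

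Finally, suppose for contradiction that $\|\hat{\bm{\theta}}_1 - \hat{\bm{\theta}}_2\| \geq \epsilon$ on an event of non-vanishing probability. Then the well-separated condition for $\hat{\bm{\theta}}_2$ forces $S(\hat{\bm{\theta}}_1,\bm{\gamma}_0) \geq S(\hat{\bm{\theta}}_2,\bm{\gamma}_0) + \eta_\epsilon$, which contradicts the display above once the high-probability event of uniform $\eta_\epsilon/4$-approximation holds. Hence $P(\|\hat{\bm{\theta}}_1 - \hat{\bm{\theta}}_2\| \geq \epsilon) \to 0$, and since $\epsilon$ was arbitrary, $\|\hat{\bm{\theta}}_1 - \hat{\bm{\theta}}_2\| = o_p(1)$. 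The main obstacle is really just interpreting the somewhat terse hypothesis as a well-separated-minimum condition; once that is in hand, convexity makes the three-inequality chain routine and no delicate curvature estimate is needed.
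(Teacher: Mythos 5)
Your argument is correct and is essentially the canonical one: the paper itself offers no proof of this lemma, deferring entirely to Lemma 2 of Galvao, Lamarche and Lima (2013), and the three-inequality chain $S(\hat{\bm{\theta}}_1,\bm{\gamma}_0) \leq S(\hat{\bm{\theta}}_1,\hat{\bm{\gamma}}) + \Delta \leq S(\hat{\bm{\theta}}_2,\hat{\bm{\gamma}}) + \Delta \leq S(\hat{\bm{\theta}}_2,\bm{\gamma}_0) + 2\Delta$ with $\Delta = \sup_{\bm{\theta}}|S(\bm{\theta},\hat{\bm{\gamma}})-S(\bm{\theta},\bm{\gamma}_0)|$, combined with the well-separation hypothesis, is exactly how that cited lemma is proved. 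You also correctly repair the typo in the statement (the missing ``$\geq\epsilon$'' in the infima). The one point deserving more care is that your gap $\eta_\epsilon$ is a random quantity (it depends on $\hat{\bm{\theta}}_2$ and on the realized criterion $S(\cdot,\bm{\gamma}_0)$), so ``with probability approaching one the supremum is less than $\eta_\epsilon/4$'' does not follow from $\Delta = o_p(1)$ alone unless $\eta_\epsilon$ is bounded away from zero in probability; this is where convexity (together with identifiability of the limiting criterion) is actually needed, via the Hjort--Pollard device of controlling the function outside a ball by its values on the sphere, rather than serving only as the interpretive gloss you give it. Since the lemma statement itself is silent on this, flagging it as an additional requirement (or invoking the convexity lemma to supply the deterministic lower bound) would close the argument.
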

\begin{proof}
See Lemma 2 in Galvao, Lamarche and Lima (2013).
\end{proof}

\begin{proof}[Proof of Proposition \ref{P1}]
The result is shown along the lines of Wooldridge (2007). We write
\begin{eqnarray*}
E( M_{it}(\bm{\theta},\pi_{0})) & = & E \left\{ [s_{it}/\pi_{0,it}] \bm{X}_{it} \psi_{\tau} (Y_{it} - \bm{X}_{it}' \bm{\theta}) - \lambda \bm{Z}_{i} \psi_{\tau}(\bm{z}_i' \bm{\alpha})   \right\} \\
 & = & E \left\{ E \left( [s_{it}/\pi_{0,it}] \bm{X}_{it} \psi_{\tau} (Y_{it} - \bm{X}_{it}' \bm{\theta}) - \lambda \bm{Z}_{i} \psi_{\tau}(\bm{z}_i' \bm{\alpha})  \;\middle\vert\; \bm{X}_{it}  \right) \right\} \\
& = & E \left\{ E \left( [s_{it}/\pi_{0,it}] \bm{X}_{it} \psi_{\tau} (Y_{it} - \bm{X}_{it}' \bm{\theta})  \;\middle\vert\; \bm{X}_{it}  \right) - E \left( \lambda \bm{Z}_{i} \psi_{\tau}(\bm{z}_i' \bm{\alpha})  \;\middle\vert\; \bm{X}_{it}  \right) \right\} \\
& = & E \left\{ E \left( [s_{it}/\pi_{0,it}] \bm{X}_{it} \psi_{\tau} (Y_{it} - \bm{X}_{it}' \bm{\theta})  \;\middle\vert\; \bm{X}_{it}  \right) \right\}, 
\end{eqnarray*}
where the last equality was obtained by Assumption \ref{A3}. Moreover, Assumptions \ref{A0} and \ref{A3} imply that,
\begin{eqnarray*}
E( M_{it}(\bm{\theta},\pi_{0})) & = &  E \left\{ E \left( [s_{it}/\pi_{0,it}] \bm{X}_{it} \psi_{\tau} (Y_{it} - \bm{X}_{it}' \bm{\theta})  \;\middle\vert\; \bm{X}_{it}  \right) \right\} \\
& = & E \left\{ \bm{X}_{it} E \left( \psi_{\tau} (Y_{it} - \bm{X}_{it}' \bm{\theta}) E \left[ [s_{it}/\pi_{0,it}] \;\middle\vert\; \bm{W}_{it},\bm{X}_{it} \right]  \;\middle\vert\; \bm{X}_{it}  \right)  \right\} \\
           & = & E \left\{ \bm{X}_{it} E \left( \psi_{\tau} (Y_{it} - \bm{X}_{it}' \bm{\theta}) E \left[ [s_{it}/\pi_{0,it}] \;\middle\vert\; \bm{W}_{ih_i},\bm{X}_{it} \right]  \;\middle\vert\; \bm{X}_{it}  \right)  \right\} 
\end{eqnarray*}
\begin{eqnarray*}
  E( M_{it}(\bm{\theta},\pi_{0}))        & = & E \left\{ \bm{X}_{it} E \left( \psi_{\tau} (Y_{it} - \bm{X}_{it}' \bm{\theta}) \;\middle\vert\; \bm{X}_{it}  \right)  \right\} = E \left\{ \bm{X}_{it} E \left( (\tau - I(Y_{it} \leq \bm{X}_{it}' \bm{\theta})) \;\middle\vert\; \bm{X}_{it}  \right)  \right\} \\
            & = & E \left\{ \bm{X}_{it} \left( \tau - F_{Y_{it}} \left(\bm{X}_{it}' \bm{\theta} \;\middle\vert\; \bm{X}_{it} \right) \right) \right\} = 0. 
\end{eqnarray*}
\end{proof}

\begin{proof}[Proof of Theorem \ref{T1}]
Let $\hat{\bm{\theta}} = (\hat{\bm{\vartheta}}',\hat{\bm{\alpha}}')' = \arg \min \{Q_{NT}(\bm{\theta},\hat{\bm{\gamma}}) \}$ and $\tilde{\bm{\theta}} = (\tilde{\bm{\vartheta}}',\tilde{\bm{\alpha}}')' = \arg \min \{Q_{NT}(\bm{\theta},\bm{\gamma_0}) \}$ where as before $\bm{\vartheta} = (\bm{\delta}',\bm{\beta}')'$ and $Q_{NT}(\bm{\theta},\pi(\bm{\gamma})) = Q_{NT}(\bm{\theta},\bm{\gamma})$ under Assumption \ref{A2}. We first show uniformly asymptotic equivalence of the objective functions, $Q_{NT}(\bm{\theta},\bm{\gamma}_0)$ and $Q_{NT}(\bm{\theta},\hat{\bm{\gamma}})$. We then show that the arguments that minimize the objective functions, $\hat{\bm{\theta}}$ and $\tilde{\bm{\theta}}$, are also asymptotically equivalent. Then we show that $\tilde{\bm{\theta}} \to \bm{\theta}_0$. 

Under Assumptions \ref{A2} and \ref{A5}, 
\begin{equation}
\sup_{\bm{\theta} \in \bm{\Theta}} | (Q_{NT}(\bm{\theta},\bm{\gamma}_0) - Q_{NT}(\bm{\theta}_0,\bm{\gamma}_0)) - (Q_{NT}(\bm{\theta},\hat{\bm{\gamma}}) - Q_{NT}(\bm{\theta}_0,\hat{\bm{\gamma}})) = o_p(1),
\end{equation}   
can be shown following Lemma 4 in Galvao, Lamarche and Lima (2013, Online Appendix). Moreover, noting that $Q_{NT}(\bm{\theta},\bm{\gamma})$ is a convex function in $\bm{\theta}$, the difference of the minimizers of the objective functions, $\hat{\bm{\theta}} - \tilde{\bm{\theta}} \to 0$ by Lemma \ref{L1}.

We now show that $\tilde{\bm{\theta}} \to \bm{\theta}_0$ following similar arguments to the one used in Kato, Galvao and Montes-Rojas (2012)'s Theorem 3.1. Let,
\begin{equation}
\MM_{N_i}(\bm{\theta}) = \MM_{N_i}(\bm{\delta},\bm{\beta},\bm{\alpha}) := \frac{1}{T} \sum_{t=1}^T \left( \omega_{it}(\bm{\gamma}_0) \rho_{\tau}( Y_{it} - \bm{d}_{it}'\bm{\delta} - \bm{x}_{it}'\bm{\beta} - \bm{z}_i' \bm{\alpha} ) + \lambda \rho_{\tau}(\bm{z}_i' \bm{\alpha}) \right),
\end{equation}
where $\omega_{it}(\bm{\gamma}_0) := s_{it}/\pi_{it}(\bm{\gamma}_0)$ and $\Delta_{N_i}(\bm{\theta}) = \MM_{N_i}(\bm{\theta}) - \MM_{N_i}(\bm{\theta}_0)$. For each $\eta > 0$, we define the ball $\mathcal{B}_i(\eta) := \{ (\bm{\delta}', \bm{\beta}', \alpha_{i}) : \| \bm{\delta} - \bm{\delta}_0 \|_1 + \| \bm{\beta} - \bm{\beta}_0 \|_1 + | \alpha_i - \alpha_{i0} | \leq \eta \}$ and the boundary $\partial \mathcal{B}_i(\eta) := \{ (\bm{\delta}', \bm{\beta}', \alpha_{i}) : \| \bm{\delta} - \bm{\delta}_0 \|_1 + \| \bm{\beta} - \bm{\beta}_0 \|_1 + | \alpha_i - \alpha_{i0} | = \eta \}$. For each $(\bm{\delta}', \bm{\beta}', \alpha_{i}) \not \in \mathcal{B}_i(\eta)$, define $\bar{\bm{\delta}}_i = r_i \bm{\delta} + (1 - r_i) \bm{\delta}_0$, $\bar{\bm{\beta}}_i = r_i \bm{\beta} + (1 - r_i) \bm{\beta}_0$, and $\bar{\alpha}_i = r_i \alpha_i + (1 - r_i) \alpha_{i0}$, where $r_i = \eta / (\| \bm{\delta} - \bm{\delta}_0 \|_1 + \| \bm{\beta} - \bm{\beta}_0 \|_1 + | \alpha_i - \alpha_{i0} |)$. Note that $r_i \in (0,1)$ and $\bar{\bm{\theta}}_i = (\bar{\bm{\delta}}', \bar{\bm{\beta}}', \bar{\alpha}_{i})'$ is in the boundary of $\mathcal{B}_i(\eta)$, $\partial \mathcal{B}_i(\eta)$. Because the convexity of the objective function holds for all $\lambda$ and therefore the objective function is convex, we have,
\begin{equation}
r_i \left( \MM_{N_i}(\bm{\theta}) - \MM_{N_i}(\bm{\theta}_0) \right) \geq \MM_{N_i}(\bar{\bm{\theta}}_i) - \MM_{N_i}(\bm{\theta}_0) = \EE(\Delta_{N_i}(\bar{\bm{\theta}}_i)) + \left( \MM_{N_i}(\bar{\bm{\theta}}_i) - \EE(\Delta_{N_i}(\bar{\bm{\theta}}_i)) \right).
\end{equation}

Note that $\EE(\Delta_{N_i}(\bar{\bm{\theta}}_i)) \geq \epsilon_\eta$ for all $ 1 \leq i \leq N$. As in Galvao et al. (2013)'s Theorem 1, we now need to show that for every $\epsilon > 0$, 
\begin{equation}
\max_{1 \leq i \leq N} P \left\{ \sup_{\bm{\theta} \in \mathcal{B}_i} | \Delta_{N_i}(\bm{\theta}) - \EE (\Delta_{N_i}(\bm{\theta})) | \geq \epsilon \right\} = o(N^{-1}). 
\end{equation}

Without loss of generality, we restrict all balls to be equal by setting $\alpha_{i0}=0$, $\bm{\beta}_0 = \bm{0}$ and $\bm{\delta}_0 = \bm{0}$. Thus, $\mathcal{B}_i(\eta) = \mathcal{B}(\eta)$ for all $ 1 \leq i \leq N$. Under Assumption \ref{A5}, following remark A.1 in Kato, Galvao and Montes-Rojas (2012), we observe that $| g_{\bm{\theta}}(u,a,\bm{X}) - g_{\bar{\bm{\theta}}}(u,a,\bm{X}) | \leq C(1+M) (\| \bm{\delta} - \bm{\delta}_0 \|_1 + \| \bm{\beta} - \bm{\beta}_0 \|_1 + | \alpha_i - \alpha_{i0} |)$, for some universal constant $C$ and $g_{\bm{\theta}}(u,a,\bm{x}) = (\rho_{\tau}(u - \bm{x}'\bm{\theta}) - \rho_{\tau}(u)) \omega(\bm{\gamma}_0) + \lambda (\rho_{\tau}(a - \bm{z}'\bm{\alpha}) - \rho_{\tau}(a))$. Since $\mathcal{B}(\eta)$ is a compact subset in $\RR^{p_x+p_d+1}$, $\exists$ $K$ $\ell_1$ balls with centers $\bm{\theta}^{(j)}$ for $j = 1,\hdots,K$ and radious $\epsilon/3 \kappa$, where $\kappa := C(1+M)$. For each $\bm{\theta} \in \mathcal{B}(\eta)$, there is $j \in \{1,\hdots,K\}$ such that $| g_{\bm{\theta}}(u,a,\bm{X}) - g_{\bm{\theta}^{(j)}}(u,a,\bm{X}) | \leq C(1+M) \epsilon/3 \kappa$, which leads to 
\begin{equation}
| \Delta_{N_i}(\bm{\theta}) - \EE ( \Delta_{N_i}(\bm{\theta}) ) | \leq | \Delta_{N_i}(\bm{\theta}^{(j)}) - \EE ( \Delta_{N_i}(\bm{\theta}^{(j)}) ) | + \frac{2 \epsilon}{3},
\end{equation}
and therefore,
\begin{eqnarray*}
P \left\{ \sup_{\bm{\theta} \in \mathcal{B}} | \Delta_{N_i}(\bm{\theta}) - \EE (\Delta_{N_i}(\bm{\theta})) | > \epsilon \right\} & \leq & P \left\{ \max_{1 \leq i \leq K} | \Delta_{N_i}(\bm{\theta}^{(j)}) - \EE (\Delta_{N_i}(\bm{\theta}^{(j)})) | + \frac{2 \epsilon}{3} > \epsilon \right\}  \\
& \leq & \sum_{j=1}^K P \left\{ | \Delta_{N_i}(\bm{\theta}^{(j)}) - \EE (\Delta_{N_i}(\bm{\theta}^{(j)})) | + \frac{2 \epsilon}{3} > \epsilon \right\}  \\
& = & \sum_{j=1}^K P \left\{ | \Delta_{N_i}(\bm{\theta}^{(j)}) - \EE (\Delta_{N_i}(\bm{\theta}^{(j)})) | > \epsilon/3 \right\}.
\end{eqnarray*}

By Hoeffding's inequality, each probability can be bounded by $2 \exp\left( - (\epsilon/3)^2 (T/2 M^2) \right)$, and therefore,
\begin{equation}
P \left\{ \sup_{\bm{\theta} \in \mathcal{B}} | \Delta_{N_i}(\bm{\theta}) - \EE (\Delta_{N_i}(\bm{\theta})) | \geq \epsilon \right\}  \leq 2 K \exp( - DT) = O(\exp(-T)),  
\end{equation}
where $D$ is a constant that depends on $\epsilon$. The desired result is obtained when $\log (N) / T \to 0$ as $N \to \infty$. 
\end{proof}

\begin{proof}[Proof of Theorem \ref{T2}]
The first part of the proof shows the weak convergence of the estimator using the arguments of Kato, Galvao and Montes-Rojas (2012)'s Theorem 3.2. We first obtain the Bahadur representation of $(\hat{\bm{\vartheta}} - \bm{\vartheta})$ and $(\hat{\alpha}_{i} - \alpha_{i0})$, then determine the rates of the reminder terms as in Kato et al., and finally obtain the asymptotic distribution after preliminary convergence rates were established. The reminder of the proof shows that the estimated $\omega_{it}(\bm{\gamma})$ does not affect the asymptotic distribution. 

Let $H_{N_i}^{(1)}(\bm{\theta}_i) := E(\mathbb{H}_{N_i}^{(1)}(\bm{\theta}_i))$ and $H_{N}^{(2)}(\bm{\theta}) := E(\mathbb{H}_{N}^{(2)}(\bm{\theta}))$ where the scores are: 
\begin{eqnarray*}
\mathbb{H}_{N_i}^{(1)}(\bm{\theta}_i) & := & \frac{1}{T} \sum_{i=1}^T \left( \frac{s_{it}}{\pi_{0,it}} \psi_{\tau}(Y_{it} - \bm{d}_{it}' \bm{\delta} - \bm{x}_{it}' \bm{\beta} - \alpha_{i}) - \frac{\lambda_T}{T} \psi_{\tau}( \alpha_{i} ) \right), \\
\mathbb{H}_{N}^{(2)}(\bm{\theta}) & := & \frac{1}{NT} \sum_{i=1}^N \sum_{i=1}^T  \frac{s_{it}}{\pi_{0,it}} \bm{V}_{it} \psi_{\tau}(Y_{it} - \bm{d}_{it}' \bm{\delta} - \bm{x}_{it}' \bm{\beta} - \bm{z}_i' \bm{\alpha}).
\end{eqnarray*}
where $\bm{\theta} = (\bm{\vartheta}',\bm{\alpha}')'$ and $\bm{\theta}_i = (\bm{\vartheta}', \alpha_i)'$. It follows then that,
\begin{eqnarray*}
H_{N_i}^{(1)}(\bm{\theta}_i) & = & E \left\{ \left( \tau - F_i( \bm{X}_{it}' (\bm{\theta} - \bm{\theta}_0) | \bm{X}_{it}) \right) [s_{it}/\pi_{0,it}] - (\lambda_T/T) ( \tau - G_i( \alpha_i - \alpha_{i0} ))  \right\} \\
H_{N}^{(2)}(\bm{\theta})   & = & \frac{1}{N} \sum_{i=1}^N E \left\{ \left( \tau - F_i( \bm{X}_{it}' (\bm{\theta} - \bm{\theta}_0) | \bm{X}_{it}) \right) [s_{it}/\pi_{0,it}] \bm{V}_{it} \right\}.
\end{eqnarray*}

The Bahadur representation of $(\hat{\bm{\vartheta}} - \bm{\vartheta}_0)$ and $(\hat{\alpha}_i - \alpha_{i0})$ can be obtained by expanding $H_{N_i}^{(1)}(\hat{\bm{\theta}}_i)$ and $H_{N}^{(2)}(\hat{\bm{\theta}})$ around $\bm{\theta}_0 = (\bm{\vartheta}_0',\bm{\alpha}_0')'$ and $\bm{\theta}_{i0} = (\bm{\vartheta}_0', \alpha_{i0})'$. We then obtain,
\begin{eqnarray}
H_{N_i}^{(1)}(\hat{\bm{\theta}}_i) & = & - \varphi_i (\hat{\alpha}_i - \alpha_{i0}) - \bm{E}_i' (\hat{\bm{\vartheta}} - \bm{\vartheta}_0) + O_p((\hat{\alpha}_i - \alpha_{i0}) ^2 \vee  (\hat{\bm{\vartheta}} - \bm{\vartheta}_0)^2 ), \label{eq:Hi1} \\
H_{N}^{(2)}(\hat{\bm{\theta}})   & = & - \frac{1}{N} \sum_{i=1}^N  \bm{E}_i (\hat{\alpha}_i - \alpha_{i0}) - \frac{1}{N} \sum_{i=1}^N \bm{J}_i (\hat{\bm{\vartheta}} - \bm{\vartheta}_0) + O_p((\hat{\alpha}_i - \alpha_{i0}) ^2 \vee  (\hat{\bm{\vartheta}} - \bm{\vartheta}_0)^2 ), \label{eq:H2}
\end{eqnarray}
where $\varphi_i := e_i - \lambda_T g_i/T$, $e_i := E(f_i(0|\bm{X}_{it}) [s_{it}/\pi_{0,it}])$, $\bm{E}_i := E(f_i(0|\bm{X}_{it}) [s_{it}/\pi_{0,it}] \bm{V}_{it})$, $g_i := E(g_i(0|\bm{X}_{it}) = E(g_i(0))$, and $\bm{J}_i := E(f_i(0|\bm{X}_{it}) [s_{it}/\pi_{0,it}]^2 \bm{V}_{it} \bm{V}_{it}')$. 

By the computational property of the quantile regression estimator (Gutenbrunner and Jureckova 1992\nocite{gutenbrunner1992}), $\max_{1 \leq i \leq N} | \mathbb{H}_{N_i}^{(1)}(\bm{\theta}_i) | = O_p(T^{-1})$. Then uniformly over $1 \leq i \leq N$, we have that,
\begin{equation*}
O_p(T^{-1}) =  \mathbb{H}_{N_i}^{(1)}(\bm{\theta}_{i0}) + H_{N_i}^{(1)}(\hat{\bm{\theta}}_i) + \{ \mathbb{H}_{N_i}^{(1)}(\hat{\bm{\theta}}_{i}) - H_{N_i}^{(1)}(\hat{\bm{\theta}}_i) + \mathbb{H}_{N_i}^{(1)}(\bm{\theta}_{i0}) \}.
\end{equation*}

Solving for $\hat{\alpha}_{i} - \alpha_{i0}$ in equation \eqref{eq:Hi1} gives,
\begin{eqnarray}
\hat{\alpha}_i - \alpha_{i0} & = & \varphi_i^{-1} \mathbb{H}_{N_i}^{(1)}(\bm{\theta}_{i0}) - \varphi_i^{-1} \bm{E}_i' (\hat{\bm{\vartheta}} - \bm{\vartheta}_0) + \varphi_i^{-1} \mathbb{H}_{N_i}^{(1)}(\hat{\bm{\theta}}_{i}) -  \varphi_i^{-1} H_{N_i}^{(1)}(\hat{\bm{\theta}}_i) \nonumber  \\
 &  & - \varphi_i^{-1} \mathbb{H}_{N_i}^{(1)}(\bm{\theta}_{i0}) +  O_p( T^{-1} \vee (\hat{\alpha}_i - \alpha_{i0}) ^2 \vee  (\hat{\bm{\vartheta}} - \bm{\vartheta}_0)^2 ) \label{eq:bahaduralpha}
\end{eqnarray}

Replacing equation \eqref{eq:bahaduralpha} in equation \eqref{eq:H2}, we obtain,
\begin{eqnarray*}
H_{N}^{(2)}(\hat{\bm{\theta}}) & = & - \frac{1}{N} \sum_{i=1}^N  (\bm{J}_i - \bm{E}_i \varphi_i^{-1} \bm{E}_i') (\hat{\bm{\vartheta}} - \bm{\vartheta}_0) - \frac{1}{N} \sum_{i=1}^N \varphi_i^{-1} \bm{E}_i \mathbb{H}_{N_i}^{(1)}(\bm{\theta}_{i0}) - \frac{1}{N} \sum_{i=1}^N \varphi_i^{-1} \bm{E}_i \bigg\{ \mathbb{H}_{N_i}^{(1)}(\hat{\bm{\theta}}_{i0}) \\
 &  &  - H_{N_i}^{(1)}(\hat{\bm{\theta}}_i) + \mathbb{H}_{N_i}^{(1)}(\bm{\theta}_{i0}) \bigg\} + o_p(\| \hat{\bm{\vartheta}} - \bm{\vartheta} \| ) + O_p( T^{-1} \vee \max_{1 \leq i \leq N} (\hat{\alpha}_i - \alpha_{i0}) ^2)
\end{eqnarray*}

Similarly than before, using the computational property of the quantile regression estimator, we obtain an expression for $H_{N}^{(2)}(\bm{\theta})$ and then solve for $\hat{\bm{\vartheta}} - \bm{\vartheta}$ obtaining,
\begin{eqnarray*}
\hat{\bm{\vartheta}} - \bm{\vartheta}_0 & = & \left( \frac{1}{N} \sum_{i=1}^N (\bm{J}_i - \bm{E}_i \varphi_i^{-1} \bm{E}_i') \right)^{-1} \left[ - \frac{1}{N} \sum_{i=1}^N \bm{E}_i \varphi_i^{-1} \mathbb{H}_{N_i}^{(1)}(\bm{\theta}_{0i}) + \mathbb{H}_{N}^{(2)}(\bm{\theta}_0) \right] \\
&  &  - \left( \frac{1}{N} \sum_{i=1}^N (\bm{J}_i - \bm{E}_i \varphi_i^{-1} \bm{E}_i') \right)^{-1}   \frac{1}{N} \sum_{i=1}^N \bm{E}_i \left\{ \mathbb{H}_{N_i}^{(1)}(\hat{\bm{\theta}}_{i}) - H_{N_i}^{(1)}(\hat{\bm{\theta}}_i) + \mathbb{H}_{N_i}^{(1)}(\bm{\theta}_{i0}) \right\}  \\
&  & + \left( \frac{1}{N} \sum_{i=1}^N (\bm{J}_i - \bm{E}_i \varphi_i^{-1} \bm{E}_i') \right)^{-1}   \left\{ \mathbb{H}_{N}^{(2)}(\hat{\bm{\theta}}) - H_{N}^{(2)}(\hat{\bm{\theta}}_i) - \mathbb{H}_{N}^{(2)}(\bm{\theta}_{0}) \right\}  + O_p( T^{-1} \vee \max_{1 \leq i \leq N} (\hat{\alpha}_i - \alpha_{i0}) ^2).
\end{eqnarray*}

By Theorem 3.2 in Kato, Galvao and Montes-Rojas (2012) (Steps 2 and 3), the first term is $O_p((NT)^{-1/2})$ and the other terms are asymptotically negligible under the conditions of the theorem. As $N^2 \log(N)^3 / T \to 0$, we obtain the Bahadur representation of the slope coefficient $\bm{\vartheta}$:
\begin{eqnarray*}
\sqrt{NT} (\hat{\bm{\vartheta}} - \bm{\vartheta}) & = & \left( \frac{1}{N} \sum_{i=1}^N \bm{J}_i - \bm{E}_i \varphi_i^{-1} \bm{E}_i' \right)^{-1} \left[ \sqrt{NT} \left( \mathbb{H}_{N}^{(2)}(\bm{\theta}_0) - \frac{1}{N} \sum_{i=1}^N \bm{E}_i \varphi_i^{-1} \mathbb{H}_{N_i}^{(1)}(\bm{\theta}_0) \right) \right]  + o_p(1) \\
                                                                                                                                                                                                  & = & \left( \frac{1}{N} \sum_{i=1}^N \bm{J}_i - \bm{E}_i \varphi_i^{-1} \bm{E}_i' \right)^{-1} \Bigg[  \frac{1}{\sqrt{NT}} \sum_{i=1}^N \sum_{t=1}^T  \left( \frac{s_{it}}{\pi_{0,it}}  \bm{V}_{it} - \bm{E}_i \varphi_i^{-1} \right) \psi_{\tau}(Y_{it} - \bm{X}_{it}' \bm{\theta}_0) \\
                                                                                                                                                                                                        &   & - \frac{\lambda_T}{\sqrt{T}} \frac{1}{\sqrt{N}} \sum_{i=1}^N \bm{E}_i \varphi_i^{-1} \psi_{\tau}(\alpha_{i0}) \Bigg]  + o_p(1).
\end{eqnarray*}

It follows that by the Liapunov Central Limit Theorem, under Assumption \ref{A8}, 
\begin{equation}
\sqrt{NT} (\hat{\bm{\vartheta}} - \bm{\vartheta}) \leadsto \mathcal{N}(\bm{0}, \bm{D}_1^{-1} \bm{D}_0 \bm{D}_1^{-1}). \label{AsyNorm}
\end{equation}

The result shown in \eqref{AsyNorm} holds for $\hat{\bm{\gamma}} = \bm{\gamma}_0$. It remains to show that $\omega_{it}(\hat{\gamma})$ does not affect the asymptotic distribution of the estimator. To this end, we use Lemma 4 in Galvao, Lamarche and Lima (2013) which is applied to show that $Q_{NT}(\bm{w},\hat{\bm{\gamma}}) - Q_{NT}(\bm{w},\bm{\gamma}_0) \to 0$. The result follows since,
\begin{eqnarray*}
\frac{1}{NT} \sum_{i=1}^N \sum_{t=1}^T | \omega_{it}(\hat{\bm{\gamma}}) - \omega_{it}(\bm{\gamma}_0) | & = & \frac{1}{NT}  \sum_{i=1}^N \sum_{t=1}^T s_{it} \left| \frac{\pi_{0,it} - \hat{\pi}_{it}}{\pi_{0,it} \cdotp \hat{\pi}_{it}} \right| \\
                                                                                                                                                                                                                                                                                                                                                                                                                & \leq & \frac{1}{NT}  \sum_{i=1}^N \sum_{t=1}^T s_{it} (\inf \pi_{0,it})^{-1} (\inf \hat{\pi}_{0,it})^{-1} \sup ({\pi_{0,it} - \hat{\pi}_{it}}) \\
                                                                                                      & \leq & O_p(1) \cdotp  O_p(1) \cdotp o_p(1),
\end{eqnarray*}
under Assumptions \ref{A2} and \ref{A3}.
\end{proof}

\end{document}